\newcommand{\eps}{\varepsilon}
\newcommand{\Z}{{\mathbb Z}}
\newcommand{\R}{{\mathbb R}}
\let\phi=\varphi
\newcommand{\hp}{{\hat p}}
\newcommand{\hq}{{\hat q}}
\newcommand{\hv}{{\hat v}}
\newcommand{\hX}{{\widehat X}}
\newcommand{\hV}{{\widehat V}}
\newcommand{\tX}{{\widetilde X}}
\newcommand{\tV}{{\widetilde V}}
\newcommand{\tp}{{\tilde p}}
\newcommand{\tq}{{\tilde q}}
\newcommand{\tv}{{\tilde v}}
\newcommand{\IP}{{\mathbb P}}
\newcommand{\IE}{{\mathbb E}}
\newtheorem{theo}{Theorem}[section]
\newtheorem{lem}[theo]{Lemma}
\newtheorem{prop}[theo]{Proposition}
\newtheorem{cor}[theo]{Corollary}
\newtheorem{rem}[theo]{Remark}
\newtheorem{claim}[theo]{Claim}
\title{Voting-based probabilistic consensuses and
their applications in distributed ledgers}
\author[1,3]{Serguei Popov}
\affil[1]{\footnotesize{Centro de Matem\'atica, University of
Porto, Porto, Portugal, serguei.popov@fc.up.pt}}
\author[2,3]{Sebastian M\"uller}
\affil[2]{\footnotesize{Aix Marseille Universit\'e, CNRS, Centrale Marseille, I2M - UMR 7373, 13453 Marseille, France, sebastian.muller@univ-amu.fr}}
\affil[3]{IOTA Foundation,
10405 Berlin, Germany}
\begin{document}

\maketitle

\begin{abstract}
We review probabilistic models known as majority dynamics 
(also known as threshold voter models) and discuss their possible applications for achieving consensus in cryptocurrency systems. 
In particular,  we show that using this approach 
in a straightforward way
for practical consensus in a Byzantine setting can be problematic and requires extensive further research.
We then discuss the Fast Probabilistic Consensus (FPC) protocol~\cite{fpc},
which circumvents the problems mentioned above
by using external randomness.
\end{abstract}

\section{Introduction}
\label{s_intro}
The interest of academia and industry in Distributed Ledger Technology (DLT) has been steadily increasing in recent years. 
While DLTs are primarily known for their applications within the financial sector, they can become the key enabler for various applications in a wide variety of industries, including smart contracts, digital identity, open APIs, smart healthcare, and data marketplaces.

At the heart of each DLT lies a consensus protocol. 
A consensus protocol enables participants
to reach a consensus 
about which transactions or data are accepted in the ledger. This consensus can be as simple as a binary decision as to whether a cryptocurrency transaction is valid or not. The main challenge of consensus protocols is that they have to be robust against faults and malicious participants who want to corrupt the ledger or delay the system's consensus finding. 

There are many different kinds of consensus protocols, and an ``optimal'' choice depends on the network situation and the actual use case; we refer to, e.g.,~\cite{vademecum}. For instance, one distinguishes between permissioned and permissionless networks, different network sizes, different grades of centrality, and different kinds of  finality. In this paper, we focus on the scenario of large permissionless and decentralized networks where communication costs are essential. The obtained consensus is probabilistic in the sense that consensus is achieved with very high probability. This kind of probabilistic consensus is typical in permissionless networks since  {\it a priori} not all participants are known and a deterministic finality is therefore not achievable. 

The non-trivial question of how participants in a permissionless and decentralized network can reach a consensus at all was answered by the invention of Proof of Work (PoW) or Nakamoto consensus~\cite{nakamoto}. PoW or the ability (computing power) to do a cryptographic puzzle replaces a (physical) identity. In~\cite{nakamoto}, Nakamoto summarizes the protocol as ``(Nodes) vote with their CPU power, expressing their acceptance of valid blocks by working on extending them and rejecting invalid blocks by refusing to work on them.'' Accepted blocks accumulate to chains, and participants agree on the ``longest chain rule''; the distributed ledger equals the longest chain. 

Unfortunately, PoW has many well-known disadvantages, such as high energy consumption and non-scalability.  
For this reason, many efforts have been made to either apply existing consensus protocols to DLTs or develop new ones. 
We again refer to~\cite{vademecum} for an overview of the current consensus protocol landscape.





There 
 has been a great deal of
``classical''
research\footnote{From the side of theoretical computer
science.} on (probabilistic) 
Byzantine consensus protocols, 
see e.g.,\  
\cite{aguilera2012correctness,ben1983another,bracha1987asynchronous,
feldman1989optimal,friedman2005simple,rabin1983randomized}.
However, the disadvantage of 
the approach shared by these papers is 
that they typically require the 
nodes to exchange $O(n^2)$ messages in each round
(where~$n$ is the number of nodes).
This can be a major barrier in situations where communication 
complexity matters
(think about a large number 
of geographically spread nodes, a situation typical in 
cryptocurrency applications). 
There are cryptocurrency systems that use
this approach in practice 
(for example, EOS, Lisk, Neo)
but in order to mitigate the communication complexity issues,
these effectively rely on reduced validator
sets 
(for example, 21 nodes in EOS).

In this paper, we review the probabilistic models known as 
(threshold) voter models or majority dynamics.
Their distinguishing feature compared with the aforementioned probabilistic
Byzantine consensus protocols is their reduced communication 
complexity --- in each round, a node only asks a constant 
number of other nodes for opinions.
This feature makes it very tempting to use them to achieving
a consensus in cryptocurrency systems --- as observed
above, low communication complexity can be very important in
this context. We will, however, see in Section~\ref{s_maj_Byz}
that using them for 
consensus in a secure way is far from being easy.
These models have been extensively studied 
in the probability and statistical physics communities since the 1970s, 
but are perhaps less known in the field of theoretical computer science.
One of the main purposes of this paper is to introduce
as rigorously as possible the (often
highly non-trivial) mathematics of threshold voter models and, in particular, highlight
some possible problems and challenges in their applications
to consensus in Byzantine setting.

\subsection{Probabilistic Voter Models: description and history}
\label{s_history}
There has been extensive research on probabilistic models where, in each round, a node only contacts a small number of other nodes in order to learn their opinions, and possibly change its own. These types of models are usually referred to as \emph{voter models}, and they were introduced in the 1970s by Holley and Liggett~\cite{holley1975ergodic} and Clifford and Sudbury~\cite{CliffSud}; they are part of a larger framework
of \emph{interacting particle systems}~\cite{Liggett}.
Since we are talking about a broad class of models, we cannot 
rigorously define them all at once, but one can \emph{describe}
this class of models in the following way:
\begin{itemize}
\item There is a (finite or infinite) set of nodes
that are identified with the vertices of a connected
graph (usually non-oriented). 
An important case is that of a complete graph where every two nodes are neighbors.
\item At each moment of (discrete or continuous) time,
each node has an \emph{opinion} (also frequently
called \emph{spin} in statistical physics 
literature) $0$ or $1$.
\item At (random or deterministic) moments of time,
a particular node contacts a random subset
of its neighbors and asks them for their current opinions.
It will then update its own opinion according to some
specific rule,
which depends on those queried opinions and also possibly on its own current opinion.
These updates can be synchronous or asynchronous
--- we comment on that in Section~\ref{s_simple_model}.\footnote{It is worth noting at this point that our discussion will sometimes switch from one particular model to another --- after all, our intention is to consider a broad class of models and understand the related phenomena and challenges.}
\item This rule has to be \emph{consistent}:
if a node applies the rule to an all-$0$ or all-$1$
set of opinions, then it will decide on the same opinion.
\item This rule has to be \emph{monotonic}:
 if a node decides on opinion~$i\in\{0,1\}$
according to it, then it would also decide on it
if we flip some of the opinions it received from~$1-i$
to~$i$ (i.e., when the opinion gets more support, a node
cannot stop preferring it).
\end{itemize}
In practical applications, it is also frequently 
necessary
to consider \emph{finalization rules} (i.e., when a node decides
that an opinion~$i$ is final and will not change it anymore)
that may depend on the history, not just on the last opinions 
received. We will comment more on that later; 
for now, let us only observe that it is important 
to understand the behavior of Markovian 
dynamics (i.e., when
the future evolution depends only on the current state of the system,
not on the past) before 
starting to consider
additional rules of that sort.

In general, mathematicians and physicists love voter models because of their rich and interesting limiting behavior
(that can e.g.,\ give rise to hydrodynamic equations, etc.);
a few notable papers are \cite{BraCoxGal01,CoxDurPer00,CoxGri83,PreSpo83,Sow99}.
In the present paper, however, we will 
 concentrate on their consensus applications.
A very important observation is that, in most cases, voter models have only two extremal invariant measures:
one concentrated on the all-$0$ configuration, and the other on all-$1$ configuration\footnote{Basically, 
this is because these configurations are absorbing states
due to the consistency property of the decision rule
and the fact that the evolution of the system is not conservative.}
--- we can naturally call these two configurations ``consensus states''. 
This fact makes voter models interesting for decentralized
consensus applications.

In this class of models, one can distinguish between different kinds of network topologies. 
A first observation is that the better the connectivity of this network, the faster the process converges to one of the extremal measures, 
see e.g., \cite{AbMo:15, Beetal16,  GaZe:18}. 

The next observation that can be made is that the convergence property depends on the distribution of the initial opinions. In cases where the initial densities are chosen at random 
and the density of $1$s 
is significantly different from $\tfrac{1}{2}$, recent research
shows that consensus can quickly be found with a very high probability,
see e.g., \cite{cooper2014power, cooper2015fast, CruiseGanesh14, elsasser2016rapid, fanti2019communication, MoNeTa:13}.
More precisely, the overall communication complexity is $O(n\ln n)$; hence, every node has to issue only $O(\ln n)$ queries.
These results clearly serve as a
motivation for the further study of possible
applications of voting-based protocols for 
consensus in real-world systems.

However, convergence can fail for some fixed initial configurations. Moreover, 
for instance~\cite{TrVuPowerOfFew} contains results that
show the fragility of the protocol on initial configurations. 
There are only a few results in the presence of an adversary. For instance,  in \cite{becchetti2016stabilizing, DoerrStabilizing} robustness was proven for particular cases for an adversary controlling up to $o(\sqrt{n})$ nodes.

Besides the more theoretical works cited above, there have been various more applied works to understand voting protocols. Again, it was shown that these voting protocols may achieve good performances in noiseless and undisturbed networks. However, their performance significantly decreases with 
noise~\cite{GaKuLe:78, GoMaMaBe:15, KaMo:07} or errors~\cite{MoDiAm:04} and may completely fail in a Byzantine setting~\cite{fpcsim}.
This weak robustness in the face of faulty nodes may explain why simple majority voting has not been thoroughly investigated (in the context of practical applications) until recently. 
Let us also cite~\cite{Jedr19,Redner19} for a statistical-physics
approach to applications of the voter model in social sciences and refer to \cite{BistablePotential} for an analysis of opinions dynamics with the help of potentials. 
These works also confirm that the road to consensus might not be
straightforward: as stated in~\cite{Redner19},
``A basic message from these modeling efforts is that incorporating 
any realistic feature of decision making typically leads to either
a dramatically hindered approach to consensus or to the prevention
of consensus altogether.''

We also have to mention another related class of models: cellular automata~\cite{codd1968,WolframCA}. Many kinds of cellular automata (especially when used in consensus applications, e.g.,
\cite{AbMo:15,Beetal16,GaZe:18,TaKiFu:96}) work in a way similar to the models discussed earlier in this section: each node asks its neighbors for opinions and then updates its own opinion using some kind of \emph{deterministic} rule (e.g., a majority one). The dynamics are also usually synchronous (all nodes update their status at the same moment), which makes the evolution of the process completely deterministic (although there may be some randomness in the choice of initial configuration and/or in the choice of the underlying connection graph). This determinism makes the mathematical treatment of these models quite different (and usually also quite harder!) from what will be seen in this paper; for this reason, we have chosen to exclude this (very interesting) class of models from the present analysis and concentrate on models with \emph{probabilistic} dynamics.
Still, we should mention that usage of a deterministic majority dynamics cellular automata was recently proposed in the NKN cryptocurrency system~\cite{nkn}.
On the other hand, in \cite{Ava18,Ava19} a probabilistic voting-based consensus protocol, similar to the models discussed in this paper, was proposed to be used in the Avalanche cryptocurrency system. However, the authors of~\cite{Ava18} and~\cite{Ava19} fail to properly analyze their proposed protocol and the question of whether it has the desired properties remains unclear.
We refer to the end of Section~\ref{s_curse_met} for a more detailed discussion.


\subsection{Outline}
In the following lines, we describe the contents of this paper.
In Section~\ref{s_maj_Byz} we use an example of asynchronous
majority dynamics to explain the methods used for analysing it 
in the simplest setting (Section~\ref{s_simple_model})
and then in the presence of Byzantine actors (Section~\ref{s_enter_Byz}).  To this end, we introduce and explain the concept of ``random walk on a potential landscape'',
which gives an excellent heuristic to understand why, once there is a considerable majority, the protocol quickly converges, but also that there is an intrinsic problem that hinders good performance in Byzantine infrastructures. 

In Section~\ref{s_curse_met} we discuss the 
challenges and difficulties that, due to the metastability 
phenomenon, arise when one intends to 
justify the usage of the majority dynamics models 
for practical consensus in a Byzantine setting.
In particular,  Byzantine actors can cause metastable states;
metastability obliterates
liveness and can compromise safety too.

Then, in Section~\ref{s_fpc} we discuss the Fast Probabilistic Consensus (FPC) protocol:
we introduce the necessary notation and define it in
Section~\ref{s_notation_fpc},
we discuss a few possible adversarial strategies in
Section~\ref{sec:maliciousNodes},
some relevant theoretical results are stated in
Section~\ref{s_theory_fpc}, and
Section~\ref{s_numerical_fpc} is devoted to discussions
of simulation results and generalizations.

\section{Majority dynamics in the presence of Byzantine actors}
\label{s_maj_Byz}
At this point, we would like to mention
that studying these sort of models requires advanced probabilistic tools and their rigorous treatment is in many aspects beyond the scope of this paper. 
Therefore, in this section, we will do our best to intuitively explain the relevant ideas. To this end, we consider a toy model that allows discussing the nature of majority dynamics without getting lost in technicalities.

Another important remark is that, when one is studying the possible applications
 of these models to consensus in DLT,
 it is essential
to be able to obtain \emph{quantitative} estimates for probabilities of relevant events,
and this complicates matters even more.
Elaborating on the last point, it is not infrequent to find in the literature roughly the following definitions
of the safety of the system:
 ``for any $\eps>0$ it is possible to
 adjust parameters in such a way
that the probability that some
 two nodes would decide differently on the validity
of a particular transaction does not exceed~$\eps$''.
This, however, is hardly acceptable, at least in the 
case when one wants to show that the system
scales up well (with respect to the total number
of nodes), for the following reason. 
Typically, the participants in the system want to 
participate in it, which means not only validating transactions
but also issuing them. Then, if those~$n$ nodes
work during time~$T$, there would typically be
$O(nT)$ transactions and \emph{all} of them need to be secure. Hence, for example, a safety estimate with $\eps=\frac{1}{\sqrt{n}}$ will not be of much help (as one will not be
able to use the union bound for estimating the probability
of \emph{at least one} ``safety violation'').
Also, when discussing the liveness guarantees
(i.e., that the consensus on the state of a transaction
eventually occurs),
being able to obtain explicit estimates is also very important. 

To underline the relevance of the last point, a somewhat extreme example is the statement ``Bitcoin addresses are insecure''
with the following ``proof'': indeed, the space of all
possible private keys is finite, therefore
an adversary who just tries them subsequently
would \emph{eventually} find the right one.
However, since nearly every $256$-bit number 
is a valid ECDSA private key, the probability of 
guessing a given private key is of the order $10^{-77}$. Hence, one would need roughly~$10^{77}$
tries to find the right one;  with the current state of technology, this essentially means ``never''.
Therefore, whenever possible, 
we will aim for explicit
estimates (at least those with a clear asymptotic
behavior with respect to the system's size).

\subsection{A simple majority dynamics model and its 
properties}
\label{s_simple_model}
Probably the most accessible way to understand majority dynamics models is by
 \emph{visualizing} them as a random motion on a potential landscape.
To explain this, we will consider a toy model,
which is a variant of a 
discrete-time asynchronous majority dynamics on a 
complete graph. Generally, mathematicians love
playing with toy models:
they are easier to treat, but already contain the main ideas
necessary to understand more complicated models. Usually, toy models give us insight into phenomena to expect in more complicated models of the same kind.
We will also use this model to present some probabilistic
tools and show how they are usually used  in the analysis
of models of this class.
Our toy model is defined in the following
way. The system consists of~$n$ nodes, and at each 
(discrete) time moment one of them is selected uniformly
at random. 
The selected node then chooses\footnote{In the following, for more clarity, we use the verb ``to select''
for the node selected to update its preference in the current
round, and the verb ``to choose'' for the three 
randomly chosen peers
whose current opinions it will use to make the decision.}
three nodes
independently and 
uniformly at random. In particular, it may choose itself,
and also may choose some nodes more than once; in the last
case the corresponding opinions are counted the same number of times. 
The selected node then adopts the majority opinion of the chosen nodes until it is selected again.
In this subsection there will be no Byzantine nodes; that is, for now
we are assuming that all nodes follow the protocol honestly.

Before proceeding, let us also comment on our choice of model.
We have chosen it to be asynchronous. More precisely,
it is an embedded\footnote{That is, tracked at the time moments 
when nodes ``do something''; those (random) time moments form an increasing sequence,
which allows us to introduce that discrete-time process in a natural way.}
discrete-time process of an asynchronous
continuous-time model where, at each time moment,
at most~$1$ node updates its
preference. Another natural choice 
would be a round-based or synchronous model: at a given round,
\emph{all} nodes  choose their random peers
independently and update 
their preferences based on the responses. For this kind of dynamic,
one can likely use the same intuition, but the rigorous treatment
of the synchronous model becomes more difficult
because it can no longer be represented by a nearest-neighbor
random walk;\footnote{One could try to argue that
it would be possible to recover the nearest-neighbor
property if, during the round, one updates the states of the nodes
according to some pre-determined ordering;
but that would destroy the Markov property.} 
analyzing a non-nearest-neighbor random walk 
in a rigorous way may be a \emph{much} more challenging task. 

In the following, let us additionally assume that~$n\geq 20$ and that~$n$  is divisible by~$4$ --- this will spare us from having
to deal with integer parts at some places, and it is a 
toy model anyway.

We denote by~$X_k$ the number of nodes with opinion~$1$
at time~$k$. Assume~$X_k=m$; the number of $1$-opinions among
three independently-chosen (with possible repetitions)
nodes then has the Binomial$\big(3,\frac{m}{n}\big)$
distribution. 
Then, if one of the~$m$ opinion-$1$ nodes
was selected (which happens with probability $\frac{m}{n}$),
it will switch its opinion to~$0$ 
with probability\footnote{Recall that, if 
$\eta\sim \text{Binomial}(3,h)$, then
$\IP[\eta=0]=(1-h)^3$, $\IP[\eta=1]=3(1-h)^2h$, $\IP[\eta=2]=3(1-h)h^2$, $\IP[\eta=3]=h^3$.}
$\big(1-\frac{m}{n}\big)^3 
 +3\big(1-\frac{m}{n}\big)^2\frac{m}{n}$; likewise,
if a node with current opinion~$0$ was selected 
(which happens with probability $1-\frac{m}{n}$),
it will switch its opinion to~$1$ with probability
$\big(\frac{m}{n}\big)^3
+3\big(1-\frac{m}{n}\big)\big(\frac{m}{n}\big)^2$. 
In particular, and as expected, $0$ and~$n$ are absorbing states:
if $X_{k_0}\in\{0,n\}$ for some $k_0$ then $X_k=X_{k_0}$ for all $k>k_0$.
In other words, the process $(X_k, k\geq 0)$
is a (one-dimensional) random walk on~$\{0,\ldots,n\}$ 
with the following transition probabilities: on $X_k=m\in \{0,\ldots,n-1\}$ we have\footnote{Formally,
this also holds for $m\in\{0,n\}$.}
\begin{equation}
\label{df_Xk}
X_{k+1} = 
  \begin{cases}
   m-1, & \text{with probability } 
  p_m
 ,\\
     m+1, & \text{with probability } 
     q_m 
 ,\\
 m, & \text{with probability } v_m=1-p_m-q_m,
  \end{cases}
\end{equation}
where
\begin{align*}
 p_m &=  \tfrac{m}{n}  \big(\big(1-\tfrac{m}{n}\big)^3 
 +3\big(1-\tfrac{m}{n}\big)^2\tfrac{m}{n}\big)
  = \tfrac{m}{n}\big(1-\tfrac{m}{n}\big)  \big(\big(1-\tfrac{m}{n}\big)^2 
 +3\tfrac{m}{n}\big(1-\tfrac{m}{n}\big)\big),\\   
 q_m & = \big(1-\tfrac{m}{n}\big)
     \big(\big(\tfrac{m}{n}\big)^3 
 +3\big(1-\tfrac{m}{n}\big)\big(\tfrac{m}{n}\big)^2\big)
 = \tfrac{m}{n}\big(1-\tfrac{m}{n}\big)  \big(\big(\tfrac{m}{n}\big)^2 
 +3\tfrac{m}{n}\big(1-\tfrac{m}{n}\big)\big), \\
 v_m &= 1-\tfrac{m}{n}\big(1-\tfrac{m}{n}\big)\big(1
  +4\tfrac{m}{n}\big(1-\tfrac{m}{n}\big)\big).
\end{align*}
 

Before going into further details, let us discuss the following
important intuitive idea. Sometimes it is convenient
to \emph{think} about a one-dimensional random walk
as a random motion of a ``Brownian particle''
over a mountainous landscape. 
The particle
moves randomly, but it \emph{prefers} to move
downhill; hopefully, Figure~\ref{f_potentials}
speaks for itself.
\begin{figure}
\begin{center}
\includegraphics[width=0.99\textwidth]{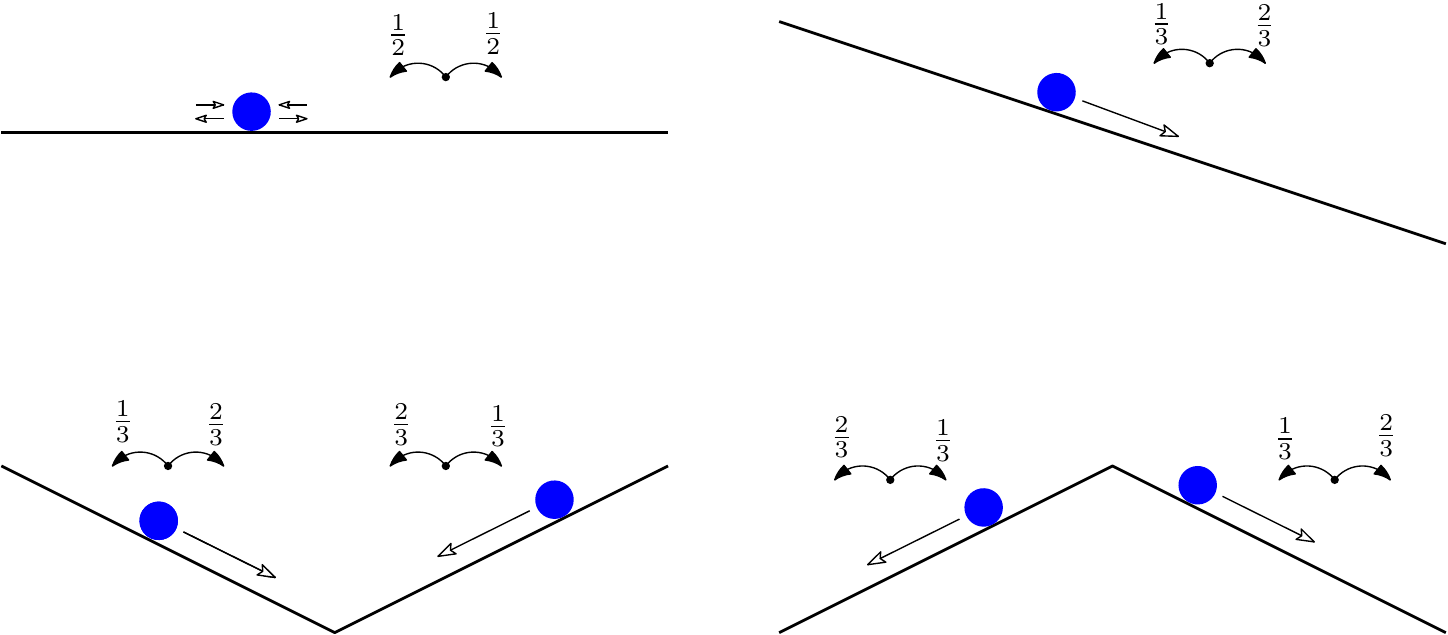}
\end{center} 
 \caption{Intuition: random motion on top of a potential;
the corresponding transition probabilities are indicated
above each of the four pictures.}
\label{f_potentials}
\end{figure}
For example, in the top left picture, the particle has
no \emph{drift} and therefore just moves around in a 
\emph{diffusive} way. 
In the top right picture, 
the particle prefers to go to the right
(i.e., it has a drift in that direction),
on the bottom left picture it would be ``attracted''
by the center of the ``valley'' 
(or \emph{potential well}, as we will mostly call it in the sequel) 
and will probably stay there
for a long time (we will elaborate on that later in this paper),
while in the bottom right picture the particle
(if starting at the center) would randomly
choose one of the slopes and then go downhill. 

Now let us go back to the random walk defined in~\eqref{df_Xk}.
Note that 
the above transition probabilities are symmetric,
in the sense that
$p_{n-m}=q_m$ and $q_{n-m}=p_m$, which implies
that
\begin{equation}
\label{pm(n-m)}
 \frac{p_{n-m}}{q_{n-m}} = \frac{q_m}{p_m}.
\end{equation}

Let us define the function~$V:\{0,\ldots,n-1\}\mapsto \R$ 
(frequently called the \emph{potential})
by $V(0)
=0$ and
\begin{equation}
\label{df_potential}
 V(k) = \sum_{j=1}^k \log \frac{p_j}{q_j}.
\end{equation}
Then, \eqref{pm(n-m)} implies that $V(n-1)=0$
and, in general, $V(m)=V(n-1-m)$
(that is, it is symmetric around $\frac{n-1}{2}$);
in particular, $V(\tfrac{n}{2}-1)=V(\tfrac{n}{2})$. Moreover,
since $p_m>q_m$ for $m<n/2$, we see that $V(m)>0$
for $0<m<n$, it is strictly increasing on~$[0,\tfrac{n}{2}-1]$
and strictly decreasing on~$[\tfrac{n}{2},n-1]$. This potential\footnote{Let us mention that this is only one  possible
definition of the potential; 
for example, in the classical
papers on random walks in random environment, one would rather use the summation $\sum_{j=0}^{k-1}$
in~\eqref{df_potential}. One can note, though, that these definitions lead essentially to similar objects that
are used in similar ways.}
will be the ``landscape profile'' that the walker
walks on, and it in fact resembles the bottom right
profile on Figure~\ref{f_potentials}.
With some more analysis, it is not difficult to
obtain that it closely resembles
the profile on Figure~\ref{f_potential_V}.
\begin{figure}
\begin{center}
 \includegraphics{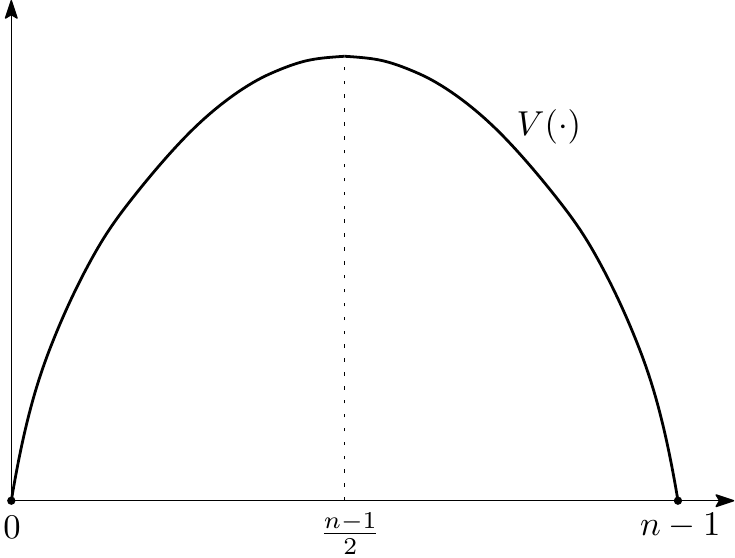}
\end{center}
 \caption{The potential~$V$.}
 \label{f_potential_V}
\end{figure} 
Indeed, we have 
\begin{equation}
\label{formula_f}
\frac{p_m}{q_m} = f(\tfrac{m}{n}),
\qquad \text{where }
f(u) = \frac{(1-u)^2+3u(1-u)}{u^2+3u(1-u)},
\end{equation}
and it is elementary to see that~$f$ is a strictly decreasing
function on the interval~$(0,1)$ with $f(\tfrac{1}{2})=1$
(and also $f(1-u)=1/f(u)$). 
Note also that $\log \frac{p_m}{q_m}\approx 0$ when~$m$
is close to~$\tfrac{n}{2}$.

One of the interesting features of the potential defined above 
is that it can be used for computing the escape 
probabilities from an interval. 
Let $\tau_A=\min\{n\geq 1: X_n \in A\}$ 
denote the hitting time of a set~$A$,
and we abbreviate $\tau_a:=\tau_{\{a\}}$ for 
singletons.
We will also use the notations~$\IP_x$ and~$\IE_x$
for the probability and expectations with respect to the process
started at~$x$.
The next result (see e.g.,\ Lemma~1 of~\cite{Sinai82})
is a useful tool for estimating exit probabilities from intervals.
\begin{lem}
\label{l_prob_exit}
For $0\leq a <x<b\leq n$ it holds that
\begin{equation}
\label{prob_exit}
\IP_x[\tau_b<\tau_a] = \frac{ \sum_{y=a}^{x-1}
  e^{V(y)}}{ \sum_{y=a}^{b-1}  e^{V(y)}}.
\end{equation}
\end{lem}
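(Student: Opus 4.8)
The plan is to use the standard martingale/harmonic-function approach for one-dimensional nearest-neighbour random walks. First I would introduce the function
\begin{equation*}
 h(x) = \sum_{y=a}^{x-1} e^{V(y)}, \qquad a\le x\le b,
\end{equation*}
with the convention that the empty sum (at $x=a$) equals $0$. The key observation is that $h$ is harmonic for the walk killed outside $[a,b]$: using the transition probabilities from~\eqref{df_Xk} one checks that $p_m h(m-1) + q_m h(m+1) + v_m h(m) = h(m)$ for every $a<m<b$, which after cancelling $v_m h(m)$ reduces to $p_m\bigl(h(m)-h(m-1)\bigr) = q_m\bigl(h(m+1)-h(m)\bigr)$, i.e. $p_m e^{V(m-1)} = q_m e^{V(m)}$. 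The latter is exactly the defining telescoping relation $V(m)-V(m-1)=\log(p_m/q_m)$ from~\eqref{df_potential}, so the harmonicity holds by construction.

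Next I would run the optional stopping argument. Let $T=\tau_a\wedge\tau_b$. Since the walk is confined to the finite set $\{a,\ldots,b\}$ before time $T$, and from every interior point there is a uniformly positive chance of reaching $\{a,b\}$ within $b-a$ steps (both $p_m$ and $q_m$ are strictly positive for $0<m<n$, hence in particular on $a<m<b$ when $0<a$, $b<n$; the boundary cases $a=0$ or $b=n$ are fine since those are absorbing), we have $T<\infty$ almost surely and in fact $\IE_x T<\infty$. Therefore $M_k := h(X_{k\wedge T})$ is a bounded martingale under $\IP_x$, and optional stopping gives
\begin{equation*}
 h(x) = \IE_x\bigl[h(X_T)\bigr] = h(a)\,\IP_x[\tau_a<\tau_b] + h(b)\,\IP_x[\tau_b<\tau_a].
\end{equation*}
Since $h(a)=0$ and $h(b)=\sum_{y=a}^{b-1}e^{V(y)}>0$, solving for $\IP_x[\tau_b<\tau_a]$ yields~\eqref{prob_exit}.

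Two small points need care, and I regard them as the only obstacles. The first is the verification that $h$ is harmonic in the genuinely lazy walk (the $v_m$ term): it is routine but must be done so that the identity $p_m e^{V(m-1)}=q_m e^{V(m)}$ is recognised as equivalent to~\eqref{df_potential} — note the indexing, since $h(m)-h(m-1)=e^{V(m-1)}$, not $e^{V(m)}$. The second is justifying the use of optional stopping: because $h$ is bounded on the finite state space and $T$ is finite a.s., the bounded-convergence form of the optional stopping theorem applies directly, so no integrability subtleties arise. Everything else is the elementary algebra of solving a $2\times 2$ linear system.
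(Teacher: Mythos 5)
Your proposal is correct and follows essentially the same route as the paper: the paper defines $g(m)=\sum_{j=0}^{m-1}e^{V(j)}$, checks the same harmonicity identity $p_mg(m-1)+q_mg(m+1)+v_mg(m)=g(m)$, and applies optional stopping to the stopped walk, then solves the resulting linear relation for $\IP_x[\tau_b<\tau_a]$. Your only deviation is the harmless normalization $h=g-g(a)$ so that $h(a)=0$, plus a more explicit justification of the optional stopping step, which the paper leaves implicit.
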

\begin{proof}
Define
\[
 g(m)=1+\frac{p_1}{q_1}+\cdots
 + \frac{p_1\ldots p_{m-1}}{q_1\ldots q_{m-1}}
  = \sum_{j=0}^{m-1}  e^{V(j)}.
\]
A straightforward computation shows that
$p_mg(m-1)+q_mg(m+1)+v_mg(m)=g(m)$
for $1\leq m\leq n-1$, 
which implies that
the process
$g(X_{j\wedge \tau_{\{0,n\}}})$ is a martingale.\footnote{We
use the notation $a\wedge b := \min\{a,b\}$.}
The Optional Stopping Theorem (see e.g.,\ Section~5.7
of~\cite{Dur10}) then gives that
\[
g(x) = \IP_x[\tau_b<\tau_a] g(b) +
 \big(1-\IP_x[\tau_b<\tau_a]\big)g(a);
\]
solving the above for $\IP_x[\tau_b<\tau_a]$ 
we obtain~\eqref{prob_exit}.
\end{proof}
Let us note that, when dealing with expressions like~\eqref{prob_exit},
it is frequently possible to 
use the heuristic reasoning ``the sum of exponentials is
usually roughly of order of its maximal term''. 
In general, \eqref{prob_exit} justifies the intuition
that the random walker ``prefers to go downhill'' ---
indeed, due to the previous observation, the probability of going
``over a potential wall'' (or ``climbing to the top
of the hill'') should typically be very small.
We also note that, for a fine study of one-dimensional
random walks on a potential, one can
use relevant spectral tools 
similarly to, e.g.,~\cite{ComPop03}.
As an immediate consequence of Lemma~\ref{l_prob_exit}
we obtain the following fact about probabilities
to reach consensus on the initially preferred value
in an already biased situation.
\begin{cor}
\label{c_cons_biased}
 Assume that $0\leq x<\tfrac{n}{2}$. Then
\begin{equation}
\label{eq_cons_biased}
 \IP_x\big[\tau_0<\tau_n\big] 
 = \IP_{n-x}\big[\tau_n<\tau_0\big]
 \geq 1 - x\exp\big(-\big(V\big(\tfrac{n}{2}\big)-V(x)\big)\big).
\end{equation}
In particular, if $\tfrac{x}{n}\leq\alpha<\tfrac{1}{2}$,
it holds that
\begin{equation}
\label{eq_cons_biased_expl}
 \IP_x\big[\tau_0<\tau_n\big] 
 \geq 1 - \tfrac{n}{2}e^{-c_\alpha n}, 
\end{equation}
where $c_\alpha>0$ depends on~$\alpha$ but not on~$n$.
\end{cor}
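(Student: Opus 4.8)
The plan is to reduce everything to Lemma~\ref{l_prob_exit} together with the structural properties of~$V$ (symmetry about $\tfrac{n-1}{2}$, positivity, unimodality with maximum at $\tfrac n2$) that were recorded just after~\eqref{df_potential}.

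First I would dispose of the equality $\IP_x[\tau_0<\tau_n]=\IP_{n-x}[\tau_n<\tau_0]$. The quickest route is to note that, thanks to $p_{n-m}=q_m$ and $q_{n-m}=p_m$, the reflected process $Y_k:=n-X_k$ is again a walk with the transition law~\eqref{df_Xk}; started from $X_0=x$ it starts $Y$ from $n-x$, and the event $\{\tau_0<\tau_n\}$ for $X$ is exactly $\{\tau_n<\tau_0\}$ for $Y$. (Alternatively, one can verify it directly from~\eqref{prob_exit} by re-indexing the sum and using $V(y)=V(n-1-y)$.)

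Next, for the bound~\eqref{eq_cons_biased}, I would apply Lemma~\ref{l_prob_exit} with $a=0$, $b=n$ to write
\[
\IP_x[\tau_0<\tau_n] = 1-\IP_x[\tau_n<\tau_0]
 = 1-\frac{\sum_{y=0}^{x-1}e^{V(y)}}{\sum_{y=0}^{n-1}e^{V(y)}}.
\]
Since $n$ is divisible by~$4$ and $x<\tfrac n2$ we have $x\le\tfrac n2-1$, so $V$ is increasing on $\{0,\dots,x\}$ and every term of the numerator is $\le e^{V(x)}$; hence the numerator is at most $x\,e^{V(x)}$. For the denominator I would simply keep the single term $y=\tfrac n2$, which equals $e^{V(n/2)}$ (and is in fact the maximal value of~$V$). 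Dividing gives $\IP_x[\tau_n<\tau_0]\le x\,e^{-(V(n/2)-V(x))}$, i.e.~\eqref{eq_cons_biased}.

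Finally, for~\eqref{eq_cons_biased_expl} it remains to show $V(\tfrac n2)-V(x)\ge c_\alpha n$ whenever $x\le\alpha n$. By~\eqref{df_potential} and~\eqref{formula_f}, $V(\tfrac n2)-V(x)=\sum_{j=x+1}^{n/2}\log f(\tfrac jn)$, and each summand is nonnegative because $f$ is decreasing with $f(\tfrac12)=1$. I would fix any $\beta$ with $\alpha<\beta<\tfrac12$: for $x+1\le j\le\beta n$ one has $\tfrac jn\le\beta$, so $\log f(\tfrac jn)\ge\log f(\beta)>0$, and the number of such integers~$j$ is at least $(\beta-\alpha)n-1$, i.e.\ of order~$n$. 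This gives $V(\tfrac n2)-V(x)\ge c_\alpha n$ with, say, $c_\alpha=\tfrac12(\beta-\alpha)\log f(\beta)$ for all sufficiently large~$n$, and after shrinking~$c_\alpha$ to absorb the finitely many small values of~$n$ (on which $V(\tfrac n2)-V(x)>0$ strictly anyway) the bound holds for every admissible~$n$; then $\IP_x[\tau_0<\tau_n]\ge 1-x\,e^{-c_\alpha n}\ge 1-\tfrac n2 e^{-c_\alpha n}$. There is no serious obstacle here; the only points requiring care are the choice of a cutoff~$\beta$ strictly below~$\tfrac12$ (so that $f$ stays bounded away from~$1$ on $[\alpha,\beta]$) and the minor bookkeeping with integer parts and small~$n$ needed to extract a single clean constant $c_\alpha$.
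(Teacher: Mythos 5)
Your proposal is correct and follows essentially the same route as the paper: Lemma~\ref{l_prob_exit} combined with the monotonicity of~$V$ (maximal term bounds on numerator and denominator) for~\eqref{eq_cons_biased}, and a cutoff strictly below $\tfrac12$ where $\log f$ stays bounded away from~$0$ to get the linear-in-$n$ potential gap for~\eqref{eq_cons_biased_expl}. The only cosmetic differences are that the paper first reduces to $\IP_x[\tau_0<\tau_{n/2}]$ and applies the lemma on $[0,\tfrac n2]$ rather than on $[0,n]$, and uses the specific cutoff $\tfrac14+\tfrac{\alpha}{2}$ instead of a generic $\beta\in(\alpha,\tfrac12)$.
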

\begin{proof}
Note that since $\IP_x[\tau_0<\tau_n] 
\geq \IP_x[\tau_0<\tau_{{n}/{2}}]$ for $0<x<\tfrac{n}{2}$, 
we will bound the latter probability.
Then, \eqref{eq_cons_biased}
easily follows from the monotonicity of~$V$
on~$\big[0,\tfrac{n}{2}\big]$
(bound the numerator of~\eqref{prob_exit} from above by the 
maximal term times the number of terms, and bound
the denominator of~\eqref{prob_exit} from below by the 
maximal term).
To obtain~\eqref{eq_cons_biased_expl},
we abbreviate 
$y=\big\lfloor\big(\tfrac{1}{4}
 +\tfrac{\alpha}{2}\big)n\big\rfloor$ and
 write
\[
 V\big(\tfrac{n}{2}\big)-V(x) \geq 
 \sum_{j=x+1}^{y}
 \log\frac{p_j}{q_j}
 \geq \big(\big(\tfrac{1}{4}-\tfrac{\alpha}{2}\big)n-1\big)
   \log\frac{p_{y}}{q_y},
\]
and then use~\eqref{formula_f} and \eqref{eq_cons_biased}. 
\end{proof}
As mentioned before,
the preceding result implies that the ``significantly
leading'' opinion (i.e., an opinion which is shared by at 
least a proportion~$\alpha >\tfrac{1}{2}$) does not
in the end prevail only with an exponentially small (in~$n$)
probability.

Next, we estimate the expected time until
one of the two ``consensus states'' is reached:
\begin{lem}
\label{l_game_duration}
 For any $x\in\{1,\ldots,n-1\}$ it holds that
  $\IE_x\tau_{\{0,n\}}\leq \tfrac{256}{15} n(1+\log n)$.
\end{lem}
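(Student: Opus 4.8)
The plan is to construct a suitable Lyapunov-type function and use a standard argument comparing $\IE_x\tau_{\{0,n\}}$ with a solution of an associated Poisson-type equation. Concretely, I would look for a function $h:\{0,\ldots,n\}\to\R_{\geq 0}$ with $h(0)=h(n)=0$ satisfying
\begin{equation*}
 p_m h(m-1) + q_m h(m+1) + v_m h(m) \leq h(m) - 1
 \qquad\text{for } 1\leq m\leq n-1,
\end{equation*}
i.e.\ the ``generator'' applied to $h$ is at most $-1$ on the interior. Once such an $h$ is found, the process $h(X_{k\wedge\tau_{\{0,n\}}}) + (k\wedge\tau_{\{0,n\}})$ is a supermartingale; applying the Optional Stopping Theorem (as in the proof of Lemma~\ref{l_prob_exit}) and letting $k\to\infty$ gives $\IE_x\tau_{\{0,n\}} \leq h(x) \leq \max_m h(m)$, so it remains to exhibit an explicit $h$ with $\max_m h(m) \leq \tfrac{256}{15} n(1+\log n)$.

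The key point is that $p_m+q_m = \tfrac{m}{n}(1-\tfrac{m}{n})(1+4\tfrac{m}{n}(1-\tfrac{m}{n}))$ is the total ``jump rate'', which is of order $\tfrac{m}{n}(1-\tfrac{m}{n})$ — small near the endpoints but bounded below by a constant in the bulk — and this is exactly what produces the $n\log n$ (rather than $n^2$) bound: near the endpoints the walk has a strong inward drift, so it does not linger there. A natural candidate is a multiple of the ``occupation-measure'' function suggested by the general one-dimensional formula: for a nearest-neighbour walk, $\IE_x\tau_{\{0,n\}}$ equals an explicit double sum involving $e^{V(\cdot)}$ and the jump probabilities. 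Since here $V\geq 0$ with $V$ bounded away from the endpoints only logarithmically and $p_j,q_j \asymp \tfrac{j}{n}(1-\tfrac{j}{n})$, I would estimate that double sum directly: split it according to whether the summation index is in the ``bulk'' $[\tfrac{n}{4},\tfrac{3n}{4}]$ or in the two ``boundary'' regions, use $e^{V(y)}/e^{V(j)}\leq 1$ when moving toward the nearer endpoint, and bound $\sum 1/(p_j+q_j)$-type sums; in the boundary region the sum of $n/(j(1-j/n)) \sim n/j$ over $j$ contributes the $n\log n$; in the bulk one gets only $O(n)$. Tracking the constants — in particular the worst-case value of $\tfrac{1}{(m/n)(1-m/n)(1+4(m/n)(1-m/n))}$, whose infimum over the relevant range governs the constant, together with the constant $\tfrac{256}{15}$ which strongly suggests evaluating $\tfrac{1}{u(1-u)(1+4u(1-u))}$ near a specific point — would then yield the stated numerical constant.

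Alternatively, and perhaps more cleanly for producing an explicit $h$, I would try $h(m) = C\big(\psi(m) + \psi(n-m)\big)$ for a concave increasing $\psi$ with $\psi(0)=0$, e.g.\ something like $\psi(m)=n\log(1+m)$ or a piecewise-linear/logarithmic profile, chosen so that the drift term $(q_m-p_m)(\text{first difference of }h)$ — which is negative and of order $\tfrac{m}{n}(1-\tfrac{m}{n})\cdot\tfrac{n}{m} \asymp 1$ in the boundary layer — dominates both the second-difference (diffusive) term and the required $-1$. The symmetry $p_{n-m}=q_m$ makes the symmetric ansatz natural and cuts the verification roughly in half. I would then just check the supermartingale inequality on $[1,\tfrac{n}{2}]$ by elementary estimates and optimize $C$ and the shape of $\psi$ to hit $\tfrac{256}{15}$.

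The main obstacle I expect is not the structure of the argument — which is the textbook Lyapunov/Optional-Stopping scheme — but getting the \emph{explicit constant} $\tfrac{256}{15}$ without excessive casework: near $m=n/2$ the drift vanishes, so the $n\log n$ bookkeeping must be done carefully enough that the bulk contribution does not blow up the constant, while near the endpoints one must use a sharp lower bound on the inward drift coefficient. Balancing these two regimes, and in particular identifying the exact point where $\tfrac{1}{u(1-u)(1+4u(1-u))}$ (or a related ratio) is extremized over the range actually used, is where the precise $\tfrac{256}{15}$ will come from, and it is the only genuinely delicate computation.
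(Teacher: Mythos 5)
Your overall framework is the same as the paper's: a Foster--Lyapunov drift inequality plus optional stopping (the paper invokes exactly this via Theorem~2.6.2 of its reference \cite{MenPopWad17}), applied to a symmetric ansatz (the paper folds the chain by $m\mapsto n-m$ onto $\{0,\ldots,\tfrac n2\}$). However, there is a genuine gap precisely where the real work of this lemma lies: in the window of width $\asymp\sqrt n$ around $m=\tfrac n2$, where the drift $p_m-q_m\asymp \frac{|n/2-m|}{n}$ vanishes, neither of your concrete candidates satisfies the required inequality. For $h(m)=C(\psi(m)+\psi(n-m))$ with $\psi(m)=n\log(1+m)$, both the drift term and the second-difference term are $O(C/n)$ near the center, so the generator is $\leq -1$ there only if $C\gtrsim n$, which destroys the $n\log n$ bound; for a piecewise-linear profile with slope $\asymp n/(n/2-m)$ (the natural ``inverse drift'' choice), the positive convexity term $\asymp p_m\,n/(n/2-m)^2$ dominates the negative drift term once $n/2-m\lesssim\sqrt n$, so the inequality again fails. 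The missing idea is a qualitatively different, diffusive-scale piece: the paper takes increments $\Delta_m\approx n/m$ for $m<\tfrac n4$, $\Delta_m\approx n/(\tfrac n2-m)$ in the middle range, but then switches to $\Delta_m\approx \tfrac n2-m$ on the last $\lceil\sqrt n\,\rceil$ sites, so that there the \emph{negative second difference} ($\Delta_{m+1}-\Delta_m=-1$, multiplied by $p_m\approx\tfrac14$) supplies the $-\Theta(1)$ drift, and at the very center the fold gives $\IE_{n/2}\big(g(Y_1)-g(\tfrac n2)\big)=-\tfrac12\Delta_{n/2}$. The worst-case drift constant $\eps=\tfrac{15}{128}$ from these three regimes, together with $g(\tfrac n2)\leq 2n(1+\log n)$, is where $\tfrac{256}{15}$ actually comes from --- not from extremizing $\frac{1}{u(1-u)(1+4u(1-u))}$.

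A secondary error: in your first route (estimating the exact double-sum formula, which the paper writes down and explicitly abandons as hard to control rigorously), the claim that the bulk $[\tfrac n4,\tfrac{3n}4]$ contributes only $O(n)$ is wrong. Because the potential is nearly flat over the central window, the sites at distance $d$ from $\tfrac n2$ (for $\sqrt n\lesssim d\lesssim n$) each contribute $\asymp n/d$, so the central region contributes $\Theta(n\log n)$ just like the boundary layers. You do flag the vanishing drift at $\tfrac n2$ as delicate, but you treat it as a constant-tracking nuisance; in fact it is the point at which your proposed ansatz must be replaced by a different shape, so as written the proposal does not yet yield the lemma.
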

\begin{proof}
First, denote 
\[
 Y_t = \begin{cases}
  X_t, & \text{if } X_t\leq \tfrac{n}{2},\\
  n - X_t, & \text{if } X_t> \tfrac{n}{2};
 \end{cases}
\]
by symmetry, the process~$Y$ is  a Markov chain with state space $\big\{0,\ldots,\tfrac{n}{2}\big\}$, and 
it has the same transition probabilities as~$X$ on~$\{0,\ldots,\tfrac{n}{2}-1\}$.
Also, it is clear that it is enough to consider 
the worst-case $x=\tfrac{n}{2}$.

For the process~$Y$,
denote by $T_m$ the expected hitting time of~$0$
starting from~$m$.
It is straightforward to obtain for $0<m< \tfrac{n}{2}$
that 
\begin{equation}
\label{recursion_Tm}
T_m=1+ p_mT_{m-1}+v_m T_m + q_m T_{m+1},
\end{equation}
and 
\begin{equation}
\label{recursion_Tn/2}
 T_{\tfrac{n}{2}}=\tfrac{1}{2}T_{\tfrac{n}{2}}+\tfrac{1}{2}T_{\tfrac{n}{2}-1}+1,
\end{equation}
since $v_{\tfrac{n}{2}}=\tfrac{1}{2}$.
\begin{rem}
Note that~\eqref{recursion_Tm} and \eqref{recursion_Tn/2} imply that
\[
 T_m-T_{m-1} = \frac{1}{p_m}+\frac{q_m}{p_m}(T_{m+1}-T_m)
\]
and 
\[
 T_{\tfrac{n}{2}}-T_{\tfrac{n}{2}-1}=2.
 \]
Using this,
the above system of equations can be solved to
obtain
\begin{equation}
\label{Tm_exact}
 T_m = \sum_{j=1}^m
  \Bigg(2\prod_{\ell=j}^{\tfrac{n}{2}-1}\frac{q_\ell}{p_\ell}
  + \frac{1}{p_j}\sum_{k=j}^{\tfrac{n}{2}-2}
   \prod_{\ell=j}^{k}\frac{q_\ell}{p_{\ell+1}}\Bigg).
\end{equation}
However, the problem with expressions such as~\eqref{Tm_exact}
is that it is not easy to deduce the asymptotic  behavior of  $T_{\tfrac{n}{2}}$ for the $p$'s and $q$'s defined below Equation~\eqref{df_Xk}.\footnote{Even though it is possible to make a reasonable guess about how the terms
in~\eqref{Tm_exact} would behave, writing it all
rigorously is yet another story.}
\end{rem}
To prove Lemma \ref{l_game_duration},
we will therefore take another route, namely the Lyapunov's functions method. This method will
permit us to estimate  $T_{\tfrac{n}{2}}$
without too heavy calculations. Also, it is worth noting
that this method is more robust, in the sense
that it may still work in the non-nearest-neighbor
case when the exact expressions such as~\eqref{Tm_exact}
are not available.

To explain the idea of this method, consider
a function $f:\big\{0,\ldots,\tfrac{n}{2}\big\}\mapsto \R_+$
defined by $f(0)=0$ and $f(m)=T_m$ for $m \in \big\{1,\ldots,\tfrac{n}{2}\big\}$. 
Then, observe that~\eqref{recursion_Tm} and~\eqref{recursion_Tn/2} imply
\begin{equation}
\label{Lyap_-1}
 \IE_m\big(f(Y_1)-f(m)\big) = -1
\end{equation}
for all $m \in \big\{1,\ldots,\tfrac{n}{2}\big\}$.
Now, instead of trying to calculate~$f$, the idea is to  find a  \emph{Lyapunov function}
$g:\big\{0,\ldots,\tfrac{n}{2}\big\}\mapsto \R_+$
which behaves similarly to~$f$ in the sense
 that, for some $\eps>0$
\begin{equation}
\label{Lyap_-C}
 \IE_m\big(g(Y_1)-g(m)\big) \leq -\eps;
\end{equation}
Theorem~2.6.2 of~\cite{MenPopWad17} then implies
 that $\IE_m \tau_0 \leq \eps^{-1}g(m)$
for all~$m>0$.

Let $\delta_n:= \lceil\sqrt{n}\,\rceil
 - \tfrac{n}{\lceil\sqrt{n}\,\rceil}\in [0,1)$
 (observe also that $\tfrac{n}{2}-\lceil\sqrt{n}\,\rceil
 \geq \tfrac{n}{4}$ for $n\geq 20$), and define for $1\leq m \leq \frac{n}{2}$
\[
 \Delta_m =
   \begin{cases}
    \frac{n}{m}+2, & \text{ for }m<\frac{n}{4},\\
    \frac{n}{\tfrac{n}{2}-m}+2, & \text{ for }\frac{n}{4}
          \leq m<\tfrac{n}{2}-\lceil\sqrt{n}\,\rceil ,\\
    \frac{n}{2}-m+2-\delta_n, 
    & \text{ for }\tfrac{n}{2}-\lceil\sqrt{n}\,\rceil
                \leq m \leq \tfrac{n}{2}.
   \end{cases}
\]
Note that there is continuity in the sense
that the first formula is also valid for $m=\tfrac{n}{4}$
and the second formula is also valid\footnote{We
had to introduce~$\delta_n$ for that reason; for following
the subsequent calculations in an easier way,
one can just assume that $\sqrt{n}$ is an integer so $\delta_n=0$.} for
$m=\tfrac{n}{2}-\lceil\sqrt{n}\,\rceil$.
Let us define $g(m)=\Delta_1+\cdots+\Delta_m$.
Then, observe that (recall that $v_{\tfrac{n}{2}}=\tfrac{1}{2}$)
\begin{equation}
\label{drift_n/2}
  \IE_{\tfrac{n}{2}}\big(g(Y_1)-g(\tfrac{n}{2}) \big)
   = -\tfrac{1}{2}\Delta_{\tfrac{n}{2}}\leq -\tfrac{1}{2},
\end{equation}
and, for $m<\tfrac{n}{2}$
\begin{align}
 \IE_m\big(g(Y_1)-g(m)\big) &=
p_mg(m-1)+v_mg(m)+q_mg(m+1)-g(m)\nonumber\\
&= p_m(g(m-1)-g(m))+q_m(g(m+1)-g(m))\nonumber\\
&= -p_m\Delta_m + q_m\Delta_{m+1}.
 \label{g_drift_calc}
\end{align}
Note also that, since $p_m\geq q_m$,
for the sake of upper bounds we can always drop
the ``$+2$'' (as well as ``$+2-\delta_n$'') part
from the calculations.
For $m<\tfrac{n}{4}$ (equivalently,
$\tfrac{m}{n}<\tfrac{1}{4}$), we have
\begin{align*}
 -p_m\Delta_m + q_m\Delta_{m+1}
& \leq -\tfrac{m}{n}\big(1-\tfrac{m}{n}\big) 
 \Big(\tfrac{n}{m}\big(\big(1-\tfrac{m}{n}\big)^2 
 +3\tfrac{m}{n}\big(1-\tfrac{m}{n}\big) \big)\\
&\qquad\qquad\qquad \qquad
-\tfrac{n}{m+1}\big(\big(\tfrac{m}{n}\big)^2 
 +3\tfrac{m}{n}\big(1-\tfrac{m}{n}\big)\big)\Big)\\
 &\leq - \big(1-\tfrac{m}{n}\big) 
  \Big(\big(1-\tfrac{m}{n}\big)^2-\big(\tfrac{m}{n}\big)^2\Big)\\ & = - \big(1-\tfrac{m}{n}\big) \big(1-2\tfrac{m}{n}\big)\\
  &\leq - \tfrac{3}{8}.
\end{align*}
Then, for $\frac{n}{4}\leq m<\tfrac{n}{2}-\lceil\sqrt{n}\,\rceil$
we can write
\begin{align*}
\lefteqn{
 -p_m\Delta_m + q_m\Delta_{m+1}
 } \hphantom{***}\\
& \leq -\tfrac{m}{n}\big(1-\tfrac{m}{n}\big) 
 \Big(\tfrac{n}{\tfrac{n}{2}-m}\big(\big(1-\tfrac{m}{n}\big)^2 
 +3\tfrac{m}{n}\big(1-\tfrac{m}{n}\big) \big)\\
&\qquad\qquad\qquad \qquad
 -\tfrac{n}{\tfrac{n}{2}-m-1}\big(\big(\tfrac{m}{n}\big)^2 
 +3\tfrac{m}{n}\big(1-\tfrac{m}{n}\big)\big)\Big)\\
 &= -\tfrac{m}{n}\big(1-\tfrac{m}{n}\big) 
 \Big(\tfrac{n}{\tfrac{n}{2}-m}\big(1-\tfrac{m}{n}\big)^2 
 - \tfrac{n}{\tfrac{n}{2}-m}\big(\tfrac{m}{n}\big)^2
 + \tfrac{n}{\tfrac{n}{2}-m}\big(\tfrac{m}{n}\big)^2
 \\
  & \qquad 
  - \tfrac{n}{\tfrac{n}{2}-m-1}\big(\tfrac{m}{n}\big)^2
  -3\tfrac{m}{n}\big(1-\tfrac{m}{n}\big)
  \frac{n}{(\tfrac{n}{2}-m)(\tfrac{n}{2}-m-1)}
 \Big)\\
 &=  -\tfrac{m}{n}\big(1-\tfrac{m}{n}\big) 
 \Big( 2 - \big( \big(\tfrac{m}{n}\big)^2 +
 3\tfrac{m}{n}\big(1-\tfrac{m}{n}\big)\big)
  \frac{n}{(\tfrac{n}{2}-m)(\tfrac{n}{2}-m-1)}
  \Big)\\
  \intertext{{\footnotesize \quad (note that
  $h(1-h)\in [\tfrac{3}{16},\tfrac{1}{4}]$ for
  $h\in [\tfrac{1}{4},\tfrac{1}{2}]$ and that 
  the last fraction is $\leq 1$)}}
  &\leq - \tfrac{3}{16}.
\end{align*}
Finally, for $\tfrac{n}{2}-\lceil\sqrt{n}\,\rceil\leq m 
< \tfrac{n}{2}$
we have 
\begin{align*}
 -p_m\Delta_m + q_m\Delta_{m+1}
& \leq -\tfrac{m}{n}\big(1-\tfrac{m}{n}\big) 
 \Big(\big(\tfrac{n}{2}-m\big)\big(\big(1-\tfrac{m}{n}\big)^2 
 +3\tfrac{m}{n}\big(1-\tfrac{m}{n}\big) \big)\\
&\qquad\qquad\qquad \qquad
 -\big(\tfrac{n}{2}-m-1\big)\big(\big(\tfrac{m}{n}\big)^2 
 +3\tfrac{m}{n}\big(1-\tfrac{m}{n}\big)\big)\Big)\\
 & =  -\tfrac{m}{n}\big(1-\tfrac{m}{n}\big) 
 \Big(\big(\tfrac{n}{2}-m\big)\big(1-2\tfrac{m}{n}\big) 
+\big(\tfrac{m}{n}\big)^2 
 +3\tfrac{m}{n}\big(1-\tfrac{m}{n}\big)\Big)\\
  \intertext{{\footnotesize  (again use that
  $h(1-h)\in [\tfrac{3}{16},\tfrac{1}{4}]$ for
  $h\in [\tfrac{1}{4},\tfrac{1}{2}]$ and that 
  the first term in the parentheses is nonnegative)}}
 &\leq - \tfrac{15}{128}.  
\end{align*}
Gathering the pieces, we find that~\eqref{Lyap_-C}
holds with $\eps=\tfrac{15}{128}$.
It is also not difficult to obtain that
\begin{align*}
g\big(\tfrac{n}{2}\big) &=
 \Delta_1+\cdots+\Delta_{\tfrac{n}{2}}\\
 &\leq n + \sum_{m=1}^{\tfrac{n}{4}-1}\frac{n}{m}
  + \sum_{m=\tfrac{n}{4}}^{\tfrac{n}{2}-\lceil\sqrt{n}\,\rceil-1}
  \frac{n}{\tfrac{n}{2}-m}
   + \sum_{m=\tfrac{n}{2}-\lceil\sqrt{n}\,\rceil}^{\tfrac{n}{2}}
   \big(\tfrac{n}{2}-m\big)
   \\
 &  \leq 2n (1+\log n ).
\label{est_g(n/2)}   
\end{align*}
Then, using Theorem~2.6.2 of~\cite{MenPopWad17} 
we conclude the proof of Lemma~\ref{l_game_duration}.
\end{proof}

The constant~$\tfrac{256}{15}$ in Lemma~\ref{l_game_duration}
can
be improved, but we need not concern ourselves with that here.
One due observation, however, is that this result
gives the correct order of the expectation of the 
hitting time (at least for the case $\log x \asymp \log n$);
this can be easily seen from the fact that
$p_m+q_m\asymp \tfrac{m}{n}$ as~$m$ decreases to~$0$.

\begin{cor}
\label{c_game_duration}
 For any~$x$ and any positive integer~$k$ it holds that
\begin{equation}
\label{tail_game_duration}
 \IP_x\Big[\tau_{\{0,n\}}>
  k \big\lceil \tfrac{512}{15}n(1+\log n)\big\rceil\Big] \leq 2^{-k}.
\end{equation}
\end{cor}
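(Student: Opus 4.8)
\textbf{Proof proposal for Corollary~\ref{c_game_duration}.}

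The plan is a standard ``restart'' argument combining Lemma~\ref{l_game_duration} with Markov's inequality and the Markov property. Write $L=\big\lceil\tfrac{512}{15}n(1+\log n)\big\rceil$, and note that $L\geq \tfrac{512}{15}n(1+\log n)=2\cdot\tfrac{256}{15}n(1+\log n)\geq 2\,\IE_x\tau_{\{0,n\}}$ by Lemma~\ref{l_game_duration}, \emph{uniformly} in the starting point $x\in\{1,\ldots,n-1\}$ (and trivially for $x\in\{0,n\}$, where $\tau_{\{0,n\}}=0$). Hence, by Markov's inequality,
\begin{equation*}
 \IP_x\big[\tau_{\{0,n\}}>L\big]\leq \frac{\IE_x\tau_{\{0,n\}}}{L}\leq \tfrac12
\end{equation*}
for every $x\in\{0,\ldots,n\}$. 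This is the base estimate.

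Next I would iterate this over blocks of $L$ steps. Let $A_j=\{\tau_{\{0,n\}}>jL\}$. If the walk has not been absorbed by time $(k-1)L$, then $X_{(k-1)L}\in\{1,\ldots,n-1\}$, so conditioning on the natural filtration $\FFF_{(k-1)L}$ and using the Markov property,
\begin{equation*}
 \IP_x\big[A_k\mid \FFF_{(k-1)L}\big]\,\1{A_{k-1}}
 = \IP_{X_{(k-1)L}}\big[\tau_{\{0,n\}}>L\big]\,\1{A_{k-1}}\leq \tfrac12\,\1{A_{k-1}},
\end{equation*}
where the inequality uses the uniform base estimate above. Taking expectations gives $\IP_x[A_k]\leq\tfrac12\,\IP_x[A_{k-1}]$, and a straightforward induction starting from $\IP_x[A_1]\leq\tfrac12$ yields $\IP_x[A_k]\leq 2^{-k}$, which is exactly~\eqref{tail_game_duration}.

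There is no real obstacle here: the only point that needs care is that the bound in Lemma~\ref{l_game_duration} is uniform in the starting state (so that it can be reapplied to $X_{(k-1)L}$ regardless of where the walk sits after $(k-1)L$ steps), and that one invokes the Markov property at the deterministic times $jL$ rather than at a random time, so no strong Markov property is needed. One could alternatively phrase the induction via $\IP_x[\tau_{\{0,n\}}>kL]\le \sup_{y}\IP_y[\tau_{\{0,n\}}>(k-1)L]\cdot\sup_y\IP_y[\tau_{\{0,n\}}>L]$, but the conditional-expectation formulation above is cleanest.
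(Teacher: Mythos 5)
Your proposal is correct and is essentially the paper's own argument: the paper obtains the same base bound $\IP_x[\tau_{\{0,n\}}>\tfrac{512}{15}n(1+\log n)]\leq\tfrac12$ from Lemma~\ref{l_game_duration} via the first-moment (Markov/Chebyshev) inequality, and then invokes a ``coin-tossing argument'' over $k$ subintervals of length $\lceil\tfrac{512}{15}n(1+\log n)\rceil$, which is exactly the block-by-block Markov-property iteration you spell out. Your version simply makes explicit the uniformity in the starting point and the conditioning at deterministic times, which the paper leaves implicit.
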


\begin{proof}
Indeed,  Chebyshev's inequality together
with Lemma~\ref{l_game_duration} imply that
\[
 \IP_x\big[\tau_{\{0,n\}}>\tfrac{512}{15}n(1+\log n)
 \big] \leq \tfrac{1}{2}.
\]
A straightforward coin-tossing argument 
(think about dividing the time interval 
$\big[0,k \big\lceil \tfrac{512}{15}n(1+\log n)\big\rceil\big]$
into~$k$
subintervals of length 
$\big\lceil \tfrac{512}{15}n(1+\log n)\big\rceil$ and note 
that on each subinterval $\tau_{\{0,n\}}$
occurs with probability at least~$\tfrac{1}{2}$) then 
implies~\eqref{tail_game_duration}.
\end{proof}

Now, let us summarize what we have obtained for our
majority dynamics toy model. We have seen that:
\begin{itemize}
 \item Lemma~\ref{l_game_duration} and
    Corollary~\ref{c_game_duration} imply
    that the system converges  rather quickly on one of the two
    consensus states. Note also
    that, since a particular node is only selected roughly once
    every~$n$ rounds, it will,
    on average, only 
    update its state a logarithmic number of times.
 \item Corollary~\ref{c_cons_biased} shows that,
  if the overall opinions are already biased 
  to one of the sides, then with high probability
  the system will achieve consensus on that prevailing
  opinion.
\end{itemize}
 Also, thanks to the \emph{quantitative} 
  estimates that we have obtained, one can prove that 
  the system can scale well: if it has to come 
  to consensus on $O(n^\gamma)$ number of bits 
  for some fixed~$\gamma>1$, 
  we will still be able to prove that it can do this 
  with high probability simply by using the union
  bounds.
Having in mind practical applications, one can also
analyze local finalization rules for the nodes,
for example of the following type: ``if last $N$ instances
of my queries have all returned the same result,
then consider this opinion to be final''
(as previously mentioned, these local
rules may be needed because in practice there might
be no central authority who is able to 
tell that the system finds itself in one of the two
consensus states).

Due to the above considerations, one may be tempted to think
that a system based on a majority dynamics consensus
 can be used for practical
 applications that demand low communication complexity.
This may indeed be the case in situations when
all nodes are honest.
However, in the next section, we will see that
the presence of Byzantine nodes can significantly complicate 
the situation ---
it will no longer be possible to analyze the system 
in such a simple-and-elementary way 
as in Section~\ref{s_simple_model}.

\subsection{Enter Byzantine nodes}
\label{s_enter_Byz}
Before discussing those, we need to develop 
a better understanding of (nearest-neighbor) random walks on top of 
a potential profile. Let us consider a 
one-dimensional random walk~$\hX$ on the state
space $[a,b]\subset\Z$; analogously to~\eqref{df_Xk}, 
we denote $\hp_x:=\IP_x[\hX_1=x-1]$, 
$\hq_x:=\IP_x[\hX_1=x+1]$, $\hv_x:=\IP_x[\hX_1=x]$,
$\hp_x+\hq_x+\hv_x=1$ for all $x\in[a,b]$
(and, naturally, $\hp_a=\hq_b=0$).
Then, define the potential~$\hV$
analogously to~\eqref{df_potential},
with $\hp$'s and $\hq$'s on the place
of $p$'s and $q$'s.

Now, the question is: suppose that we know
what the potential~$\hV$ looks like; 
for example, it could be as shown in
Figure~\ref{f_metastability_full1}.
\begin{figure}
\begin{center}
 \includegraphics[width=0.73\textwidth]{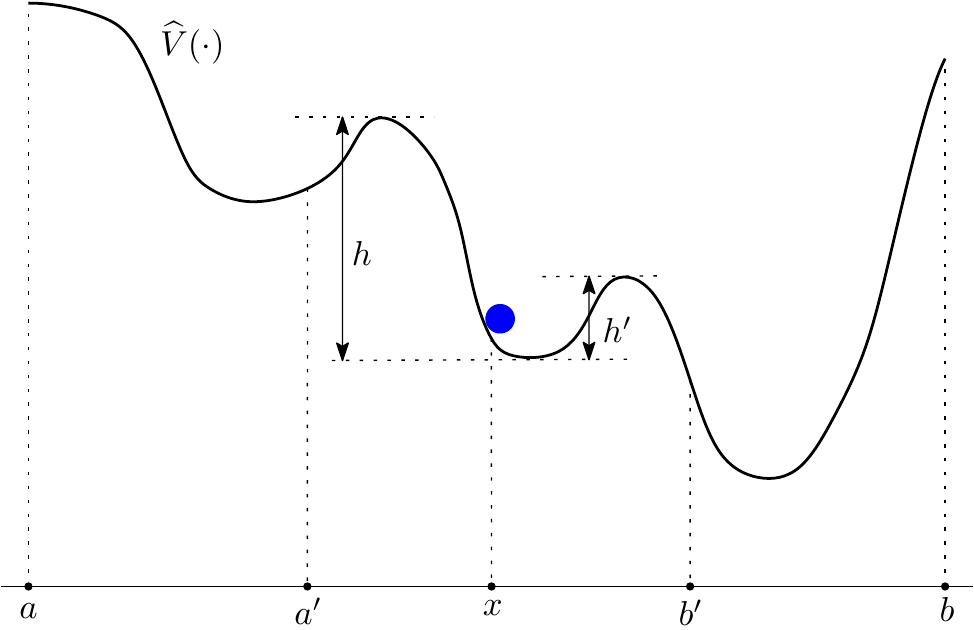}
\end{center}
 \caption{Random walk on a potential.}
\label{f_metastability_full1}
\end{figure} 
Based on this, what can we say about the properties 
of the process~$\hX$?

In the following few lines, we will formulate
some principles that would permit
us to understand the behavior of the random walk as
a function of the potential, at least on an 
intuitive level. 
This intuition can be turned into rigorous arguments; the exact formulations
(let alone proofs) are beyond the scope 
of this paper; but let us cite the seminal paper~\cite{Sinai82}
where this sort of analysis was applied to random walks
in random environments to characterize its highly non-trivial 
behavior in the recurrent regime. 
We will also assume that the random walk does not have any absorbing states (in contrast to the one in the previous section) -- we will see that this will be the case in the presence of Byzantine nodes anyway.  
Let us also keep in mind that the following intuition typically applies to various reasonable definitions of the potential and that we are trying to understand processes with \emph{large} state spaces. For this reason we will not try to be excessively precise when making pictures, e.g., we represent the potential by continuous graphs even though it is defined on a discrete domain,
and, for example, we pretend that $an$ and $an+1$ are indistinguishable on the picture when~$n$ is large.
\begin{claim}
\label{cl_behavior_potentials}
The random walk on top of a potential 
roughly behaves in the following way:
\begin{itemize}
\item[(i)] it ``prefers'' to go downhill;
\item[(ii)] the probability that it goes
in an ``improbable direction'' is roughly\footnote{That is,
up to factors of a smaller order.}
the exponential of minus the difference of ``heights
to overcome'' (for example,
on Figure~\ref{f_metastability_full1}, 
$\IP_x[\tau_{a'}<\tau_{b'}]\approx e^{-(h-h')}$);
\item[(iii)] its stationary distribution at~$x$ 
 is roughly proportional to~$e^{-\hV(x)}$;
so, in the long run, the process will stay for
the overwhelming amount of time at the bottom(s)
(global minima)
of the potential;
\item[(iv)]  The time to go out of a potential well of depth~$h'$
(such as the one where the walker currently is 
on Figure~\ref{f_metastability_full1})
is \emph{roughly} an Exponential random variable
with the expected value~$\approx e^{h'}$.
\end{itemize}
\end{claim}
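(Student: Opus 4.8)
The statement bundles several heuristic principles rather than a single assertion, so the plan is, for each item, to isolate the rigorous fact behind it and name the tool that delivers it; throughout, ``$\approx$'' is read up to factors subexponential (typically polynomial in $b-a$) in the scale at hand, and I will use repeatedly that a sum of positive reals lies between its largest term and that term times the number of summands. Items~(i) and~(ii) follow directly from Lemma~\ref{l_prob_exit}, which applies verbatim to $\hX$ since its proof used only the nearest-neighbor structure and the Optional Stopping Theorem. Item~(i) is in fact built into the one-step probabilities: $\hV(x)-\hV(x-1)=\log(\hp_x/\hq_x)$, so $\hV$ is locally increasing at $x$ precisely when $\hp_x>\hq_x$, i.e., when from $x$ the walk is more likely to step toward the smaller potential. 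For item~(ii), given $a'<x<b'$ Lemma~\ref{l_prob_exit} gives
\[
  \IP_x[\tau_{a'}<\tau_{b'}]=\frac{\sum_{y=x}^{b'-1}e^{\hV(y)}}{\sum_{y=a'}^{b'-1}e^{\hV(y)}},
\]
and replacing each sum by its maximal term turns the right-hand side into $\exp\bigl(\max_{[x,b')}\hV-\max_{[a',b')}\hV\bigr)$, that is, the exponential of minus the difference of the barrier heights to be overcome on the two sides; on Figure~\ref{f_metastability_full1} this reads $e^{-(h-h')}$.

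For item~(iii) I would use that $\hX$ confined to $[a,b]$ is an irreducible birth--death chain, hence reversible: detailed balance $\pi(x)\hq_x=\pi(x+1)\hp_{x+1}$ gives the explicit stationary measure $\pi(x)=\pi(a)\prod_{j=a+1}^{x}\hq_{j-1}/\hp_j$, and a one-line telescoping comparison with $e^{-\hV}$ shows $\pi(x)=\tfrac{\hq_a}{\hq_x}\,e^{-(\hV(x)-\hV(a))}\,\pi(a)$, so $\pi(x)\asymp e^{-\hV(x)}$ away from the endpoints (this bounded discrepancy is exactly the content of the footnote about competing definitions of the potential). Ergodicity then identifies the long-run occupation with $\pi$, and $\pi$ concentrates on the global minima whenever the potential gap between the minima and everything else is large compared with $\log(b-a)$ --- the regime of interest.

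Item~(iv), the metastable exit time, splits into the mean and the law. For the mean: feeding $\pi\asymp e^{-\hV}$ and the max-term heuristic into the classical closed form for the expected hitting time, from the bottom of the well, of the top of the barrier of height $h'$ bounding it, gives expectation $\asymp e^{h'}$ --- the dominant summand sits at the barrier top, where $1/\pi$ is largest while the partial sum of $\pi$ has already saturated at its well-bottom value; alternatively one takes the upper bound from a Lyapunov argument in the spirit of Lemma~\ref{l_game_duration} and the lower bound from item~(ii), since each excursion from the bottom escapes the well with probability only $\approx e^{-h'}$. For the law: decompose the trajectory, by the strong Markov property, into i.i.d.\ excursions from the well-bottom; the number of them preceding the escaping one is geometric with mean $\approx e^{h'}$ and each excursion lasts at most $\mathrm{poly}(b-a)\ll e^{h'}$ time, so the exit time is a geometric sum of short i.i.d.\ blocks and, normalized by its mean, is asymptotically Exponential --- equivalently, one invokes the quasi-stationary distribution inside the well together with the fact that the relaxation time to it is $\ll e^{h'}$.

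The hard part is this second half of item~(iv): making precise the separation of time scales --- relaxation inside the well being negligible next to the exit time --- and bounding the error in the renewal decomposition. That is where genuine machinery enters (quasi-stationary distributions, spectral-gap estimates, the potential-theoretic approach to metastability), which is why the authors sensibly leave full proofs aside; items (i)--(iii), by contrast, are elementary consequences of Lemma~\ref{l_prob_exit}, reversibility, and the max-term heuristic.
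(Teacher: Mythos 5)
Your proposal is correct and follows essentially the same route as the paper: items (i)--(ii) from Lemma~\ref{l_prob_exit} plus the max-term heuristic, item (iii) from reversibility/detailed balance giving a stationary measure that is $e^{-\hV(x)}$ up to a bounded factor, and item (iv) from the decomposition into a geometric number of excursions from the well bottom, each of negligible length compared with $e^{h'}$. The only cosmetic difference is that the paper handles the final (successful) excursion via a Doob $h$-transform argument, whereas you absorb it into the general claim that excursions are short and point to quasi-stationary distributions as the rigorous backstop; both are consistent with the heuristic level of the claim.
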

We justify the above in the following way.
Item (i) is clear from the definition, 
and~(ii) is essentially a consequence of Lemma~\ref{l_prob_exit}
(together with the observation about sums of
exponentials after it).
As for~(iii), from the fact that the Markov chain
is reversible, we obtain that the 
stationary distribution at~$x$ is \emph{exactly}
proportional to 
$\frac{\hq_a\cdots\hq_{x-1}}{\hp_{a+1}\cdots \hp_x}$;
now, this is in fact \emph{almost}\footnote{Up to a, typically bounded, multiplicative factor.} $e^{-\hV(x)}$.
Regarding~(iv), 
note that it is quite common for the hitting
times of Markov chains to have approximately Exponential distribution; see 
e.g.,\ \cite{Ald82,AldBro92,AldBro93,Keil80,MQS21,ManSco19}.
\begin{figure}
 \begin{center}
 \includegraphics{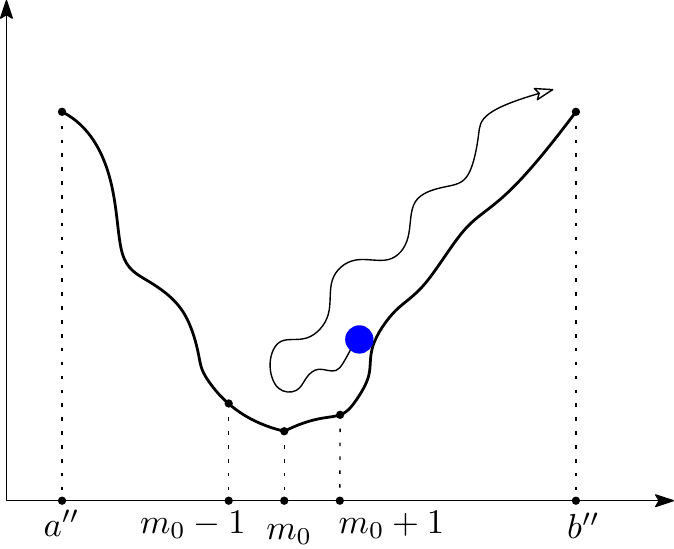}
\end{center}
 \caption{Escaping from a potential well.}
\label{f_pot_well}
\end{figure} 
Let us now explain why the expected time to go out of a potential
well should be of order of exponential of its depth.
Consider the situation depicted in Figure~\ref{f_pot_well}:
$\hV(a'')=\hV(b'')$, $m_0$ is where the minimum of~$\hV$
on the interval~$[a'',b'']$ is attained, and,
for the sake of cleanness, let us
also suppose that
$\hV(m_0 \pm 1)-\hV(m_0)$
are bounded away from~$0$.
\begin{figure}
 \begin{center}
 \includegraphics[width=\textwidth]{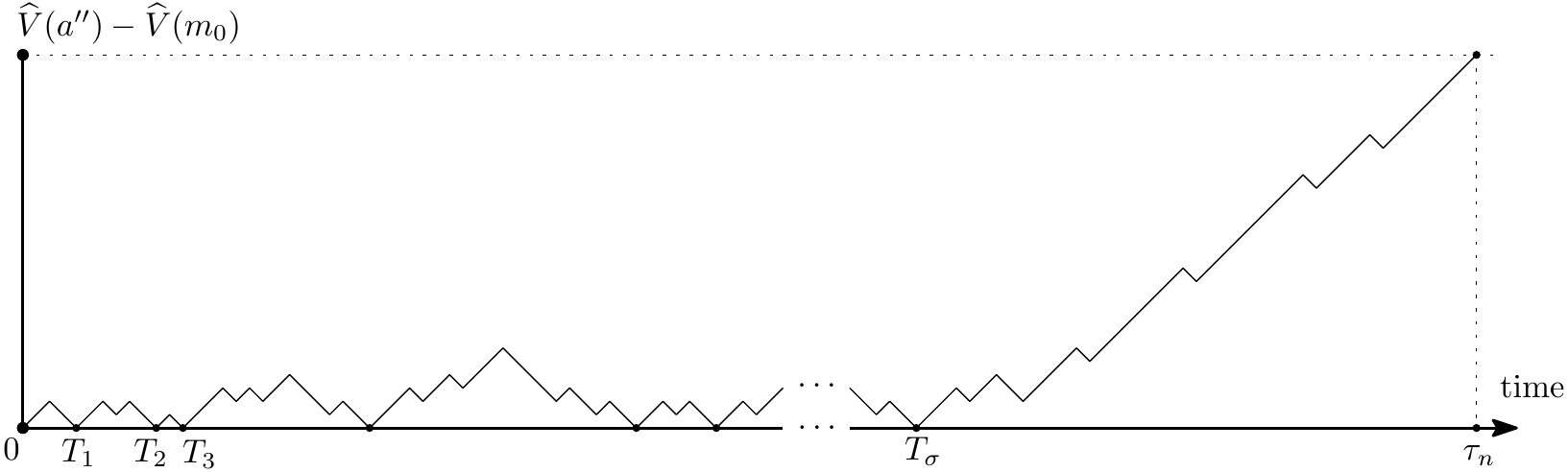}
\end{center}
 \caption{Calculating the distribution of the time 
 to escape from a potential well (the vertical coordinate
 represents the ``height'' of the walk above~$\hV(m_0)$
 with respect to the potential).}
\label{f_hitting_time}
\end{figure} 
 Then, the idea is to consider \emph{excursions}
 of the Markov chain out of $m_0$ before hitting~$\{a'',b''\}$,
 i.e., climbing to the level $\hV(a'')-\hV(m_0)$
potential-wise (it should be noted that, in general, 
analysis of excursions is a very common
 approach when studying random walks).
 Let~$\sigma$ be the number of those
 excursions, and $T_1,T_2,\ldots,T_\sigma$ be the time moments
 when the process is at~$m_0$ 
 (see Figure~\ref{f_hitting_time}).
Using Lemma~\ref{l_prob_exit}, we obtain that,
up to factors of a smaller order,
\[
 \IP_{m_0\pm 1}[\tau_{\{a'',b''\}}<\tau_{m_0}] 
\approx e^{-(\hV(a'')-\hV(m_0))}. 
\]
This implies that the number of excursions~$\sigma$
 has approximately Geometric (and therefore 
 approximately Exponential) distribution with mean
around $e^{(\hV(a'')-\hV(m_0))}$;
therefore, the expected value of~$T_\sigma$ itself
should be of order  $e^{(\hV(a'')-\hV(m_0))}$ as well
(because it is roughly~$\sigma$ times the mean
excursion length).
We also need to argue that the length
of the last excursion (from~$m_0$ to~$\{a'',b''\}$)
is typically negligible compared to~$T_\sigma$.
To see that, note that
it is a trajectory of a \emph{conditioned} 
(on the event $\{\tau_{\{a'',b''\}}<\tau_{m_0}\}$) random
walk, the so-called Doob's $h$-transform of the original one,
see e.g.,\ Section~4.1 of~\cite{Popov21}
and also Exercise~4.58 of~\cite{Ross_Models09}.
Essentially, it will behave as a random walk with
inverted drift (i.e., towards the interval's extremes), so 
it will typically have a length that does not exceed much
in the order of magnitude the length of the interval~$[a'',b'']$;
therefore, it will be typically negligible compared 
to~$e^{(\hV(a'')-\hV(m_0))}$.

Now, with Claim~\ref{cl_behavior_potentials} to hand,
we can explain how the random walk on the above
potential will typically behave. Of course,
we assume that the size of the interval~$[a,b]$
is large.

Assume that the random walk 
on Figure~\ref{f_metastability_full} starts somewhere 
near site~$a$ (that is, ``on top'' of the potential).
It will quickly come to the first \emph{metastable state}
(the leftmost potential well on Figure~\ref{f_metastability_full}), 
and then spend roughly~$e^{h_1}$ time there.
Then, eventually, it will overcome the potential
wall separating the first potential well from the second
one, and then spend around~$e^{h_2}$ time there
(that would be the second metastable state of the process).
Then, finally, it will climb the second wall of height~$h_2$
and come to the deepest 
potential well (the \emph{ground state}). After that,
essentially, it will stay there (of course, very occasionally 
it will go back to the metastable states since
it is a \emph{finite} Markov chain, but the time 
spent there would be negligible in comparison 
to the time spent in the ground state).
\begin{figure}
\begin{center}
 \includegraphics[width=0.73\textwidth]{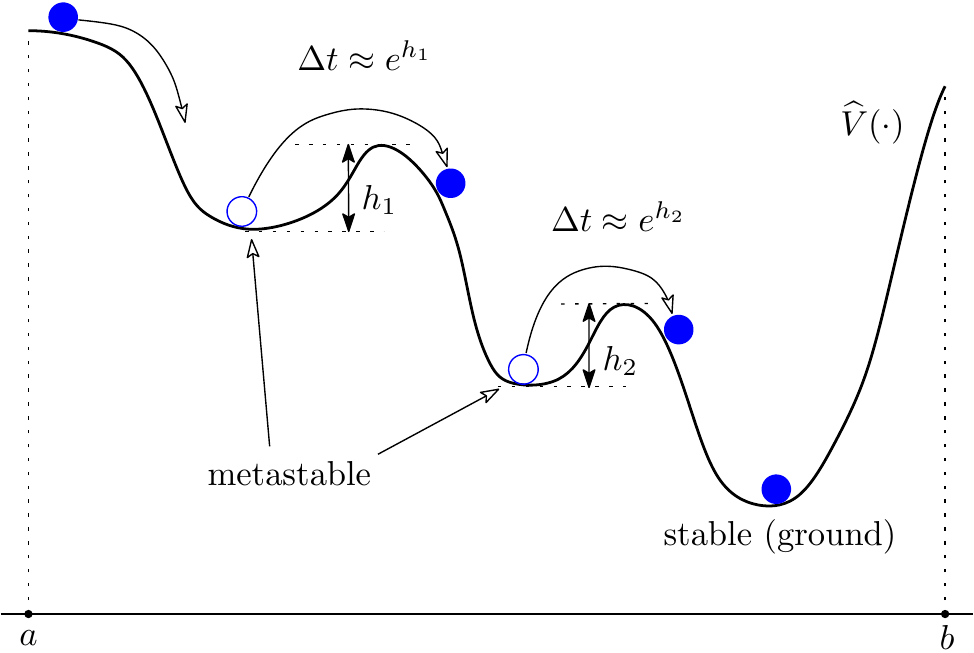}
\end{center}
 \caption{The metastability phenomenon, and time spent
 in metastable states.}
\label{f_metastability_full}
\end{figure}
The term ``metastable state'' thus means a state\footnote{Here,
the word ``state'' is used in a loose sense: it does not mean
a particular point of the state space of the Markov chain,
but rather a region close to a given local minimum of the potential.}
of the system which is not
the one where the system passes the majority of the time in the long
run (which would be around the \emph{global}
 minimum of the potential), but where the system still spends
quite a long time before finding its way out of there. 
 
In general, metastability is a ubiquitous phenomenon
in the context of statistical physics and stochastic 
processes; a complete overview of the relevant
literature would be impossible here, but let us refer to
\cite{BEGK04,BovHol16,CGOV84,FreWen12,KS21,OliVar05,PenLeb71}
for a start.
We mention also that this process of going from one 
metastable state to the next one is sometimes referred to 
as \emph{aging} in the literature,
see e.g.,~\cite{DemGuiZei01,EnrSabZin09}.

Now, we return to the majority dynamics model of 
the previous section, but this time with Byzantine nodes.
Fix a positive number $q\in(0,\tfrac{1}{2})$.
We will now assume that, among those~$n$ nodes,
$(1-q)n$ are \emph{honest} (i.e., they follow the prescribed
protocol), and $qn$ are \emph{Byzantine}.
In principle, the Byzantine nodes can behave in any matter; here
we assume that they all are controlled by a single
entity, called \emph{the adversary}. This entity is able to know the current
state of the system; i.e., the adversary is 
\emph{omniscient}.\footnote{For sure,  complete omniscience is unrealistic;
but, on the other hand, it is unclear how exactly to quantify the degree of knowledge 
that the adversary has about the network, and it is also unclear up 
to ``which point'' (whatever that means) it can gain that knowledge (now or in the future). 
However, it is clear that the adversary can obtain \emph{some} knowledge about the state of the 
network (by directly observing it and also maybe by doing some statistical analysis) 
and will try to do that in the case when the security depends on the network state's obfuscation.
Therefore, it is reasonable to assume omniscience to be on the safe side.}
Let us warn the reader already at this point that 
the adversary can adopt a multitude of different
\emph{strategies} -- these are, roughly, the 
``behaviors'' that the adversary uses to achieve
its goals. These goals may be, for example, to delay the 
consensus among the honest nodes, or to prevent
it from occurring altogether by having different nodes finalize
on different opinions, or to have some nodes
reach their final opinion while keeping the others
undecided.

Let us consider a very simple adversarial strategy:
``Help the weakest''. That is, when queried, the adversarial nodes
will always vote on the less popular opinion
among the honest nodes. 
The goal now is to try\footnote{``Try not! Do, or do not!''
\copyright} 
to analyze the system similarly
to the previous section, via a one-dimensional 
nearest-neighbor random walk.

Denote by~$\tX_k$ be the number of \emph{honest}
nodes with opinion~$1$
at time~$k$. Byzantine nodes have no \emph{real} opinions,
and if a Byzantine node is \emph{selected}, then
the state of the system will not change.
Let us assume~$\tX_k=m$, and consider the following two cases.

\paragraph{Case 1: $m\leq \tfrac{(1-q)n}{2}$.}
In this case, any Byzantine node, if queried, 
will vote for~$1$. 
So, the number of $1$-opinions among
three independently chosen 
nodes will have the Binomial$\big(3,\frac{m}{n}+q\big)$
distribution. 
Then, if one of the~$m$ honest opinion-$1$ nodes
was selected (which happens with probability $\frac{m}{n}$),
it will switch its opinion to~$0$ with probability
$\big(1-\frac{m}{n}-q\big)^3 
 +3\big(1-\frac{m}{n}-q\big)^2\big(\frac{m}{n}+q\big)$;
 likewise,
if an honest node with current opinion~$0$ was selected 
(which happens with probability $1-q-\frac{m}{n}$),
it will switch its opinion to~$1$ with probability
$\big(\frac{m}{n}+q\big)^3
+3\big(1-\frac{m}{n}-q\big)\big(\frac{m}{n}+q\big)^2$. 

\paragraph{Case 2: $m > \tfrac{(1-q)n}{2}$.}
In this case, a Byzantine node, if queried, 
will vote for~$0$, and therefore
the number of $1$-opinions among
three independently chosen 
nodes has the Binomial$\big(3,\frac{m}{n}\big)$
distribution. 
Then, as before, 
if one of the~$m$ honest opinion-$1$ nodes
was selected,
it will switch its opinion to~$0$ with probability
$\big(1-\frac{m}{n}\big)^3 
 +3\big(1-\frac{m}{n}\big)^2\big(\frac{m}{n}\big)$;
if an honest node with current opinion~$0$ was selected, 
it will switch its opinion to~$1$ with probability
$\big(\frac{m}{n}\big)^3
+3\big(1-\frac{m}{n}\big)\big(\frac{m}{n}\big)^2$. 

Note, in particular, that $0$ and~$(1-q)n$ are 
\emph{not} absorbing states anymore ---
because even if all the honest nodes agree,
there is a chance that the selected
honest node will choose at least two Byzantine
ones for opinions and those will convince it to switch.

That is, we find that the process $(\tX_k, k\geq 0)$
is a (one-dimensional) random walk on~$\{0,\ldots,(1-q)n\}$, 
and on $\tX_k=m$ we have
\begin{equation}
\label{df_Xk_Byz}
\tX_{k+1} = 
  \begin{cases}
   m-1, & \text{with probability } 
  \tp_m
 ,\\
     m+1, & \text{with probability } 
     \tq_m 
 ,\\
 m, & \text{with probability } \tv_m=1-\tp_m-\tq_m,
  \end{cases}
\end{equation}
where
\begin{align*}
 \tp_m &= \begin{cases}
   \tfrac{m}{n} \big(\big(1-\tfrac{m}{n}-q\big)^3
 +3\big(\tfrac{m}{n}+q\big)\big(1-\tfrac{m}{n}-q\big)^2\big), &
 \text{ if } m\leq \tfrac{(1-q)n}{2},\\
 \tfrac{m}{n} \big(\big(1-\tfrac{m}{n}\big)^3
 +3\tfrac{m}{n}\big(1-\tfrac{m}{n}\big)^2\big),
 \vphantom{\displaystyle\sum^a}&
 \text{ if } m > \tfrac{(1-q)n}{2},
 \end{cases}
\\   
 \tq_m & = \begin{cases}
   \big(1-\tfrac{m}{n}-q\big) \big(\big(\tfrac{m}{n}+q\big)^3
 +3\big(\tfrac{m}{n}+q\big)^2\big(1-\tfrac{m}{n}-q\big)\big), &
 \text{ if } m\leq \tfrac{(1-q)n}{2},\\
\big(1-\tfrac{m}{n}-q\big) \big(\big(\tfrac{m}{n}\big)^3
 +3\big(\tfrac{m}{n}\big)^2\big(1-\tfrac{m}{n}\big)\big),
 \vphantom{\displaystyle\sum^a}&
 \text{ if } m > \tfrac{(1-q)n}{2}.
 \end{cases}
\end{align*}
We then define the potential~$\tV$ 
analogously to~\eqref{df_potential} (with $\tp$'s
and $\tq$'s on the place of $p$'s and $q$'s).
\begin{figure}
\begin{center}
 \includegraphics{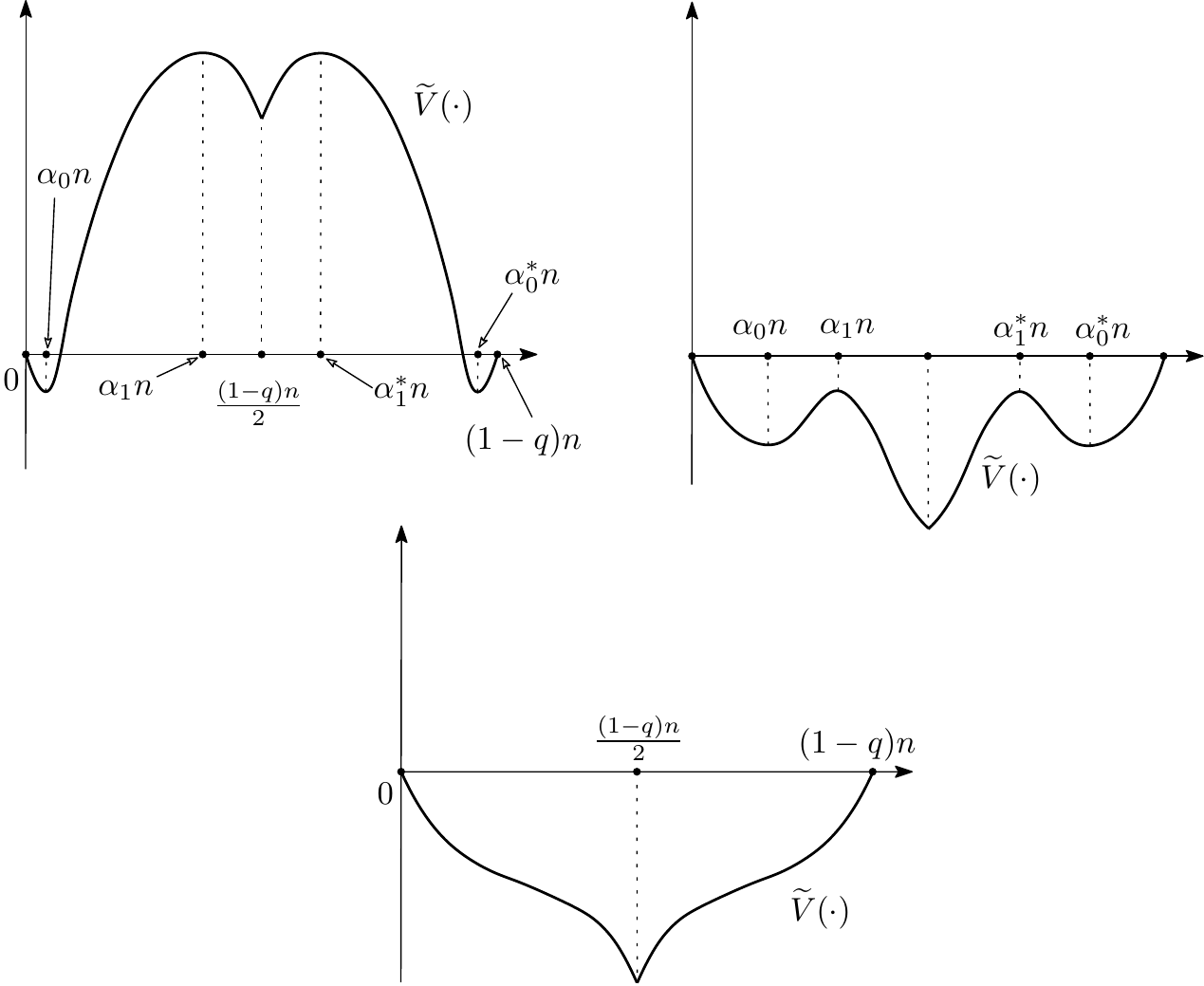}
\end{center}
 \caption{The potential profile for the majority
 dynamics with Byzantine nodes: 
 with $q<0.09029$ (top left), $0.09029<q<\frac{1}{9}$
 (top right), $q>\frac{1}{9}$ (bottom).}
\label{f_potential_Byz}
\end{figure} 
By examining the above transition probabilities,
one can show that~$\hV$ looks as shown 
in Figure~\ref{f_potential_Byz}.
Indeed, let us try to find out
for which values of~$m\leq \tfrac{(1-q)n}{2}$ 
the drift $\tq_m-\tp_m$ would become~$0$ ---
these would correspond to the ``flat points''
(typically minimums or maximums) of the potential~$\hV$.
To this end, we abbreviate $\alpha=\tfrac{m}{n}$, 
and solve the equation
\[
 \alpha \big((1-\alpha-q)^3
 +3(\alpha+q)(1-\alpha-q)^2\big)
 = (1-\alpha-q) \big((\alpha+q\big)^3
 +3(\alpha+q)^2(1-\alpha-q)\big).
\]
Note that one can divide  both sides
by $(1-\alpha-q)$ and then cancel the $\alpha^3$ terms. The two real roots that we are interested in are 
\begin{align*}
    \alpha_0(q) & =
    \tfrac{1}{4}\Big(1-\sqrt{1-\tfrac{8q}{1-q}}\Big)-q,\\
  \alpha_1(q) & =
    \tfrac{1}{4}\Big(1+\sqrt{1-\tfrac{8q}{1-q}}\Big)-q,
\end{align*}
which only exist for $q\in (0,\frac{1}{9}]$.
One can also obtain that 
\begin{equation}
\label{asymp_alpha01}
\alpha_0(q)=3q^2+O(q^3) \text{ and }
\alpha_1(q)=\tfrac{1}{2}-2q+O(q^2) \quad\text{ as } q\to 0.
\end{equation}
Define also the ``symmetric'' points
$\alpha^*_0(q)=1-q-\alpha_0(q)$ 
and $\alpha^*_1(q)=1-q-\alpha_1(q)$.
To be able to distinguish between the situations in
the second and third
pictures in Figure~\ref{f_potential_Byz},
it is then important to be able to compare
the values of $\hV\big(\alpha_0(q)\big)$
and $\hV\big(\tfrac{(1-q)n}{2}\big)$;
for this, we essentially need to know
the sign of the sum $\sum_{m=n\alpha_0(q)}^{(1-q)n/2}
\log \tfrac{\hp_m}{\hq_m}$.
We can approximate the sum with 
the integral and note that
the equation (in~$q$)
\begin{equation}
\label{scary_integral_eq}
 \int\limits_{\alpha_0(q)}^{\frac{1-q}{2}}
  \log\frac{s((1-s-q)^2+3(s+q)(1-s-q))}{(s+q)^3
   + 3(s+q)^2(1-s-q)} \, ds = 0
\end{equation}
has the solution $q^*\approx 0.09029$
(obtained numerically).
Therefore, roughly at~$q^*$ there should
occur the phase transition between
the top left and  top right pictures
in Figure~\ref{f_potential_Byz}.
   
The key difference with the non-Byzantine
case (recall Figure~\ref{f_potential_V})
is that now in this $\tV$-picture there are possibly
three potential wells.
Claim~\ref{cl_behavior_potentials}~(iv)
tells us that, in a situation like that, 
the random walker will typically need quite a lot
of time to escape from the well.

For the 3-choice majority dynamics with the ``help the weakest'' adversarial strategy, we therefore find that
\begin{itemize}
    \item with $q<q^*$, there are three locally attractive states - two pre-consensus ones, and one ``balanced''
(i.e., ``$50/50$''); 
but the pre-con\-sen\-sus states are ground 
(that is, ``stronger''; i.e., in the long run,\footnote{It might be
\emph{really} long, though.}
the system will ``prefer'' to stay in those
due to Claim~\ref{cl_behavior_potentials}~(iii));
\item  with $q^*<q<\frac{1}{9}$, 
those three states remain, but now the balanced one
is the ground state (so the system might spend some time in a pre-consensus state, 
but in the long run it will fall to the ground state,
again due to Claim~\ref{cl_behavior_potentials}~(iii));
\item  with $q>\frac{1}{9}$, there are no pre-consensus states (so the system just goes to the ``balanced''
ground state from anywhere).
\end{itemize}
    
  Let us discuss  the $q<q^*$ situation
 (the top left picture in Figure~\ref{f_potential_Byz}) in a little more detail.
By Claim~\ref{cl_behavior_potentials}~(ii)
(in fact, Lemma~\ref{l_prob_exit} applied to the
present situation), if the process starts sufficiently
out of the interval $[\alpha_1(q)n,\alpha^*_1(q)n]$
(i.e., out of the central potential well), it will go 
to the corresponding pre-consensus state with 
probability exponentially close to~$1$. Then, an important
\emph{practical} observation is that, 
 if in a pre-consensus state, the honest
nodes have a way to figure out the preferred
state of the system, e.g.,\ by averaging over last~$N$ 
responses received,  with a very high probability,
a significant majority will choose the same as the preferred
state. This can be seen as a ``positive'' result:
even with Byzantine nodes, in the case where the initial opinion
configuration is sufficiently away from the ``balanced''
situation, we are able to achieve practical consensus 
with high probability.\footnote{Also, using the technique
of {stochastic domination}, it is even possible to show that 
the same holds for \emph{any} adversarial strategy,
not only for ``help the weakest''.}
However, later we will see
that the presence of the central potential well
does pose some very serious problems in practice.

Let us now ask the following question: what if we slightly increase the communication complexity by asking~$k>3$ nodes
for opinions, instead of just three? 
What are the benefits of this increase?
For convenience, let us assume that~$k$ is odd,
so that draws are not possible. 
First, let us show that 
the distance from total consensus
to the bottoms
of the two
pre-consensus wells will be of order $q^{\tfrac{k+1}{2}}$
as $q\to 0$. 
Recall~\eqref{asymp_alpha01}:
this agrees with the fact that $\alpha_0(q)\asymp q^2$
with $k=3$
. 
Indeed, assume that the system (i.e., the number of honest nodes with current opinion~$1$) is currently in the state
around~$m=Cq^{\tfrac{k+1}{2}}n$. Then, the probability that an honest node with opinion~$1$ is selected 
is approximately $Cq^{\tfrac{k+1}{2}}$, and then it 
will flip its opinion with probability close to~$1$
(since the proportion of the nodes who would respond~$0$
when chosen will be around
$1-q-Cq^{\tfrac{k+1}{2}}>\tfrac{1}{2}$, so 
the fact
that the Binomial distribution becomes more
concentrated as~$k$ increases will take care of that).
On the other hand, the probability  
that an honest node with opinion~$0$ is selected 
is around $1-q-Cq^{\tfrac{k+1}{2}}$, and then it 
will flip its opinion with probability 
around~$c^*q^{\tfrac{k+1}{2}}$
(this can be seen from the explicit formula for the Binomial
probability distribution).
Then, we see that (up to terms of smaller order, and
keeping the notations $\tp_m, \tq_m$ for the case of~$k$
chosen nodes)
$\frac{\tp_m}{\tq_m}\approx \tfrac{C}{c^*}$,
so the bottom of the left pre-consensus well will be 
around $c^*q^{\tfrac{k+1}{2}}n$.
That is, increasing~$k$ indeed permits us to 
approximate the pre-consensus to the consensus,
thus also making it easier to design the local decision rules
used by the honest nodes to finalize their opinions.

However, let us show that the width or size of the central 
potential well will not tend to zero
(in fact, it will always remain at least~$qn$).
Indeed, notice first that if the state of the 
system is close to $\tfrac{(1-2q)}{2}n$, then
(since the Byzantine nodes would vote for~$1$)
a chosen peer would vote for~$0$ or~$1$ roughly
with equal probabilities; however, since 
the probability of selecting an honest 
node with opinion~$1$ is significantly less than
the probability of selecting an honest 
node with opinion~$0$, the process will 
have a drift to the right (meaning that it is
inside the central potential well).
For completeness, observe also that
for a fixed $a<\tfrac{(1-2q)}{2}$ one 
can find a large enough~$k$ such that~$an$
is out of the central well. Indeed,
the probability that a randomly chosen node
would vote for~$1$ would then be 
$a+q<\tfrac{(1-2q)}{2}+q=\tfrac{1}{2}$,
and then it is clear that the probability
that a Binomial$(k,a+q)$ random variable does not
exceed $\tfrac{k-1}{2}$ can be made as small as we 
want by the choice of~$k$. This shows that,
as we grow~$k$, the central potential well
``shrinks towards'' $\big[\tfrac{(1-2q)}{2}n,
 \tfrac{1}{2}n\big]$, but that interval (of length~$qn$) is always
 a part of it. 

To finalize this subsection on a not-so-optimistic
note, observe that we have analyzed only one 
adversarial strategy: the ``help-the-weakest''. 
 This point also
explains why analyzing the system via simulations is not a straightforward task --- these need to be 
performed for \emph{any} adversarial strategy,
and there are infinitely many of them.
It is also necessary to mention that, 
in principle, with more complicated adversarial strategies
or node's finalizations rules 
the process may not even be a Markov chain
which would make its analysis much more difficult.

\subsection{Obstacles on the road to consensus:
the curse of metastability}
\label{s_curse_met}
We have just seen the most immediate consequence of having
Byzantine nodes: some natural adversarial strategies can lead 
to metastable states, where, due to
Claim~\ref{cl_behavior_potentials}~(iv),
the process can spend quite a lot of time
(exponential in the number of Byzantine nodes
provided that they make a positive fraction
of the totality, which could mean
essentially forever in practice).
\begin{figure}
\begin{center}
  \includegraphics[width=\textwidth]{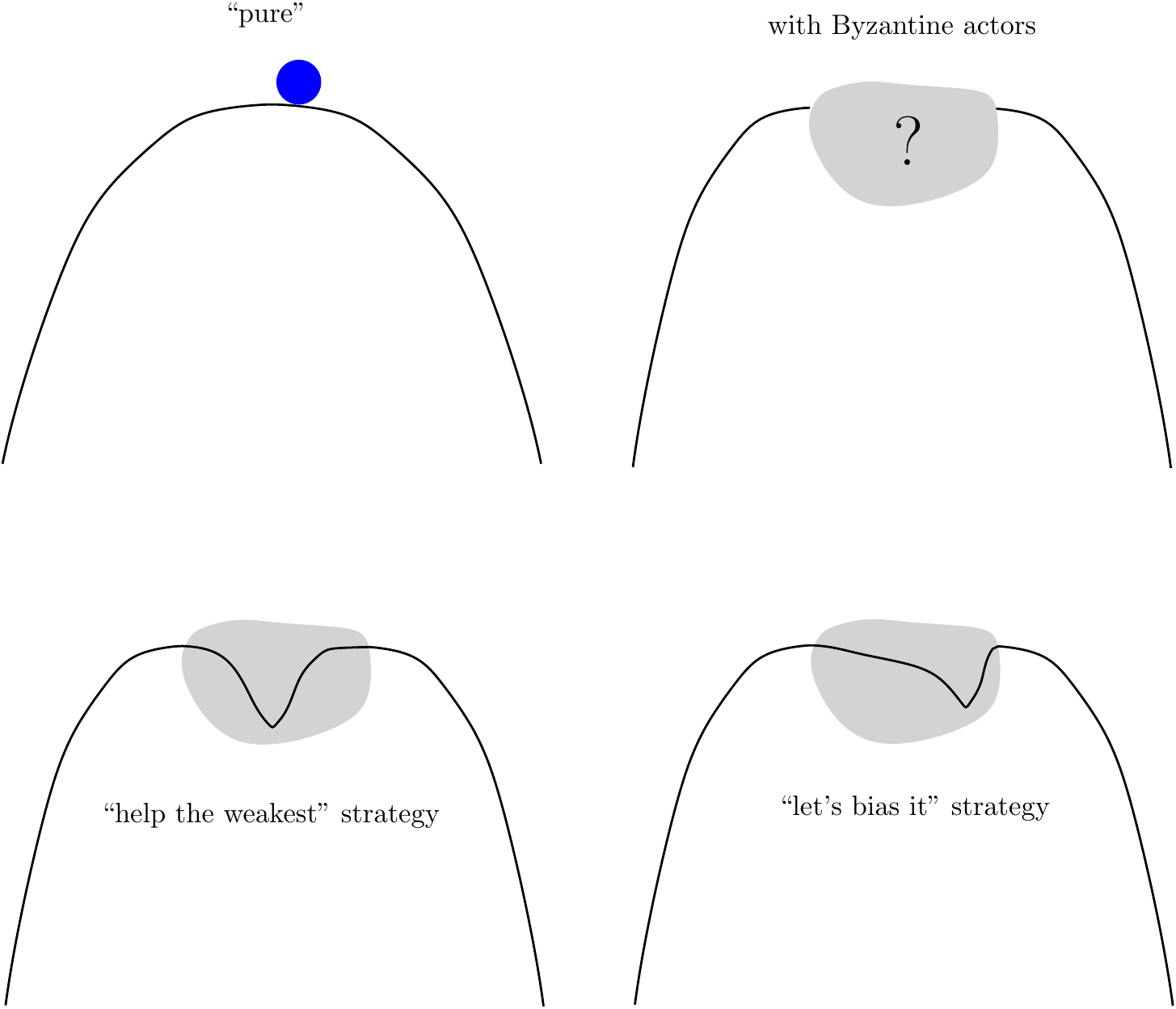}
\end{center}
 \caption{Majority dynamics with Byzantine actors:
the curse of metastability.}
\label{f_maj_dyn}
\end{figure} 
%
The situation is summarized in
Figure~\ref{f_maj_dyn} (we disregard the pre-consensus 
potential wells there, as they may be made small by the choice
of~$k$, and, in any case, the honest nodes are able to make the correct
decision while the process is spending time in those).
Essentially, in the all-honest situation (top left picture),
regardless of the starting position, the process will
quickly end up in one of the two consensus states.
However, with Byzantine actors (top right picture),
the adversary has a ``region of control''
(indicated by the big question mark) where it
is able to dictate what the system will do.
For example, for the natural ``help the weakest''
strategy (discussed in detail in Section~\ref{s_enter_Byz})
the potential profile is as in the bottom left picture,
meaning that, when started not far away from the balance,
the process is likely to spend a lot of time near the 
``exactly balanced'' situation. As noted at the end of the previous
section, other adversarial strategies are possible,
for example, the one on the bottom right part of Figure~\ref{f_maj_dyn}.
There, the adversary prefers to ``bias'' the overall
opinion counts slightly towards one of the extremes
(while still remaining in its region of control).
This approach might make sense because it would increase 
the probability that, during that long time spent in 
the potential well, 
some of honest nodes would already 
finalize on that ``slightly preferred'' opinion
because their finalizations rules would be triggered 
by some ``improbable'' events
(and then the adversary would even be able to sway the consensus
to the other side).

The last point highlights an important (and perhaps
not a-priori clear) problem: the existence of metastable
states not only compromises liveness but also casts
doubts on the system's safety.
The reason is that, as it sometimes happens with 
large stochastic systems, improbable things can happen
if one waits for a really long time. 
Since the local finalization rules used by the nodes
are usually publicly known, the adversary may try
using that knowledge to create conditions whereby
 finalization still might occur during a sufficiently
long time interval while the process is still
within the adversary's region of control. 
To assess the risk of such a development,
one would need to employ some entropy-versus-energy-type arguments (that,
basically, compare the inverse probability of an ``improbable''
event to the number of ``tries'' when that event might occur),
and the question of optimizing over adversary's strategies
would stand even then. 

In view of the above, a natural idea would be to set
some limit on the time to achieve consensus, i.e.,
if a transaction's status is still undecided after some
fixed time~$t_{\text{max}}$, then just consider this transaction invalid, i.e., set its status to~$0$.
However, this can be dangerous, because some nodes would 
finalize on~$1$ just barely while others just barely not
(and therefore set the final opinion to~$0$ according 
to that rule), with
a non-negligible probability (especially if the adversary
stops messing with the system ``just before''
the limit time~$t_{\text{max}}$ elapses).
Another natural idea would be to introduce some 
tie-breaking rule: based on the information 
available in the system,
 the nodes would spontaneously
switch to one of the opinions 
in case the balanced state persists for too long.  If enough honest nodes 
do that approximately at the same time,  the system may be pushed
towards one of the extremes.
This idea certainly deserves further research;
however, this sort of metastability-breaking is not easy
to achieve when the adversary has (almost) full information: 
the adversary would be facing a sort of stochastic control problem, 
and might be able to influence the state of the system
(i.e., ``adjust the controls'') in such a way
that not enough honest nodes are able to take that 
sort of coordinated action, and therefore the split would persist.
In the next section, however, we will describe a 
metastability-breaking mechanism that uses \emph{external
randomness} (which cannot be \emph{predicted}
by the adversary) and show that it indeed makes
the system achieve consensus in a well-controlled way.

In any case, to explore the attacks outlined in this
section, the adversary has to be ``smart''
(i.e., adjust its behavior to the state of the honest nodes
in perhaps a non-trivial way)
and well-connected (to be able to know the state
at least with some degree of precision).
It is clear that the system will not ``break by itself''
(with the adversary not abiding by the rules but just doing random things); on the other hand, it is also clear
that a smart-and-well-connected adversary can disrupt
the system to a considerable extent, at least in theory.
For sure, one has to bear in mind that the adversary's omniscience
assumption is an idealized one, and that in a practical system
 from the adversary's point of view there will be some 
``natural randomness'' (stemming from, e.g., differences in nodes' local perceptions, 
fluctuations in message delivery times, own limitations
in connectivity and processing power that would prevent the adversary from
catching up with the rest of the network, 
and so on) that the adversary
cannot predict/control. 
In principle, this ``adversarial unawareness''
could also act as a metastability breaker, but one would have to exercise 
quite some caution when \emph{relying} on it. The problem is 
that there will likely be a sharp phase transition with respect
to the amount (whatever it means) of that randomness:
too much ``fog of war'' would prevent the attacker from maintaining metastable
states, but a small amount of it would mean that the above-discussed  potential
wells are still present and therefore the adversary would be still in control (in the situation when opinions are split).
In any case, as we have already mentioned, it is not necessarily clear how to 
define/measure/control this factor and what  guarantees there are that 
no adversary would be able to become \emph{sufficiently} powerful
in the foreseeable future.


One example of a paper that could have benefited from
the analytic tools and arguments outlined here
is~\cite{Ava19}
(we cite also~\cite{Ava18}, which contains more
technical details).
As argued above,
to access the protocol's safety it is not
enough to just prove that it will converge
to consensus starting from an already sufficiently
biased situation
(i.e., sufficiently out of the central potential
well, in the case of the ``help the weakest''
adversarial strategy; admittedly, even for this
situation the arguments in these papers are 
far from convincing).
There is no attempt to approach it via the 
above-mentioned ``entropy-versus-energy'' 
arguments, which, in particular, would be important
in the case of the above-mentioned
``let's bias it'' strategy. 
The authors of the aforementioned papers also claim that they have studied
their model via simulations, but it is not completely clear 
which adversarial strategies were considered,
and why those strategies would be optimal from the adversary's
point of view.
It is also unfortunate that these papers
do not contain explicit estimates on the probabilities
of ``bad'' events;
as argued in this section,
in such circumstances it is hardly
possible to access if the system can scale well.
As consensus protocols play a prominent role in every distributed ledger's safety,
the authors want to stress the need to approach any ``solution'' without a metastability-breaking mechanism with skepticism.

\section{FPC: breaking metastable states}
\label{s_fpc}
In this section, we describe a recent solution, the fast probabilistic consensus (FPC) introduced
in~\cite{fpc}. FPC breaks metastable states with the use of \emph{external randomness}. 

The use of common external randomness necessitates a change of the dynamics. We no longer ask nodes to update their opinions one by one, but instead ask them to update their opinions synchronously in rounds. 
As previously noted, it is possible that
such a synchronous model could \emph{in theory} be analyzed as the asynchronous toy model, but the analysis would be much more technical. 
In fact, the corresponding random walk would not only be able to jump steps of length $1$, but its step distribution may be of full support.

As in the previous toy model, the basic feature of FPC is that, in each round, each node queries a random subset of known peers of size $k$ about their current opinion.  We allow~$k$ to be relatively large, e.g., $k=25$, but still assume that $k\ll n$.
Once a node receives the answers to its queries, it updates its opinion. This continues until a local stopping rule is satisfied, as discussed in the previous section. 

On a high level, the main idea of the FPC can be explained in 
the following way. As we have seen in Section~\ref{s_maj_Byz},
if the adversary is able to know (at least with some
precision) what the state of the honest part of the system is
and is able to predict the honest nodes' behavior, 
then it may be able to seriously interfere with the system.
\begin{figure}
\begin{center}
 \includegraphics{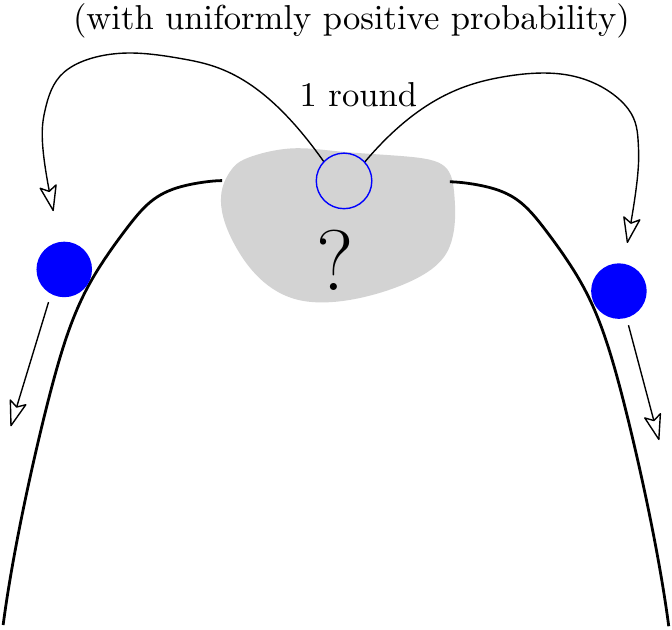}
\end{center}
 \caption{FPC: defeating the metastability 
 with round-based common random thresholds.}
 \label{f_FPC_idea}
\end{figure} 
On the other hand, assume that the decision rule used by 
the honest nodes to choose the preferred opinion is 
\emph{unpredictable}; that is, cannot be known in advance
by the adversary.\footnote{Or, at least, \emph{with positive
probability}
the adversary is not able to predict it.} 
In this case, the result is illustrated in Figure~\ref{f_FPC_idea}:
there is a good chance that, in just one round,
the situation completely escapes the adversary's control
(the system jumps into a pre-consensus state from where
it will quickly converge to consensus regardless
of the adversary's actions). This unpredictability
is achieved by using a sequence of common random numbers,
which are either provided by an external source or generated
by the nodes of the system themselves
(or, in practice, by a subcommittee of those),
see e.g.,~\cite{cascudo2017scrape, Lenstra_Wes17, popov2017decentralized, 
schindlerhydrand, syta2017scalable, wes}.

\subsection{Notation and description of the protocol}
\label{s_notation_fpc}
Every node $i$ starts with an initial state or opinion $s_{i}(0)\in \{0, 1\}$. We update every nodes opinion at each round, and we note $s_{i}(t)$ for the opinion of the node $i$ at time $t\in \mathbb{N}$. 
At each time step each node chooses $k$ (random) neighbors $C_{i}$, queries their opinions, and calculates 
\begin{equation*}
\eta_{i}(t+1)=\frac1{k_{i}(t)} \sum_{j\in C_{i}} s_{j}(t),
\end{equation*}
where $k_{i}(t)$ is the number of replies at time $t$.

It is possible that not all nodes respond and one can set the value for $s_j(t)$ in the case of no response to $0$, $1$ or \verb?NA?. This case is similar to a proportion of nodes being faulty and can be modeled through strategies for malicious nodes introduced in Section~\ref{sec:maliciousNodes}.

In FPC the first round may be different from the subsequent rounds.  Any binary decision, as in statistical test theory or binary classification, gives rise to two possible errors. In many applications, the two possible errors are asymmetric in the sense that one error is more severe than the other.  
Choosing the threshold of the first voting round differently allows the protocol to include this kind of asymmetry. Let $U_1$ be a uniform distributed random variable $\mathrm{Unif}( [a, b])$, for $0 \leq a\leq b \leq 1 $,  and $U_{t}, t= 1, 2,\ldots$ be uniform random variables with law $\mathrm{Unif}( [\beta, 1-\beta])$ for some parameter $\beta \in [0,1/2]$. We assume the random variables $(U_t)_{t\geq 1}$ to be independent.

The update rules are now given inductively  for $t\geq 1$ by
\begin{equation}\label{eq:Update}
s_{i}(t+1)=\left\{ \begin{array}{ll}
1, \mbox{ if } \eta_{i}(t+1) > U_{t}, \\
0, \mbox{ if } \eta_{i}(t+1) < U_{t}, \\
s_{i}(t), \mbox{ otherwise.}
\end{array}\right. 
\end{equation}

The essential ingredient of FPC, compared to previous voting models is that the thresholds $U_t$ are random. The usual majority dynamics would compare the $\eta$'s against the deterministic threshold~$\tfrac{1}{2}$.

Nodes have to decide locally when to stop the protocol and  no longer update their opinions. One possibility of such a local stopping rule is that a node stops querying other nodes and finalizes its opinion 
if it did not change its opinion for the last~$\ell$ rounds.
In the original paper~\cite{fpc} there is also a \emph{cooling-off}
period of~$m_0$ rounds after which the above rule starts to be applied. 
A node will continue to answer queries until some maximal number of rounds, say~$\Delta$, is attained.

We will distinguish in this exposition between  two  different ways a voting-based consensus protocol can fail.

\begin{enumerate}
\item \textit{Termination failure}: 
We say the protocol suffers a termination failure if some node did not stop querying before the maximal number of rounds. 

\item \textit{Agreement failure}: 
We say the protocol suffers an agreement failure if two honest nodes terminate with different final opinions. 
\end{enumerate}

We say that the protocol \emph{terminates in agreement}, 
if there are neither agreement nor termination failures.

\subsection{Threat model}
\label{sec:maliciousNodes}
Byzantine infrastructures and attack strategies can be very diverse. We assume that adversarial nodes can exchange information freely between themselves and can agree on a common strategy. In fact, they all may be controlled by a single individual or entity. 

As in Section~\ref{s_enter_Byz},
we suppose that $q n$ nodes are adversarial for some $q\in [0,1]$. The remaining $(1-q)n$ nodes are \emph{honest}, i.e., they follow the recommended protocol.

With respect to the \emph{behavior} of the adversarial nodes, we distinguish three cases:
\begin{itemize}
 \item \emph{Cautious adversary}\footnote{Also known as a \emph{covert adversary}, cf.~\cite{aumann2007security}.}: 
Any adversarial node responds the same value to all the queries it receives in that round. The adversary knows the opinions of the honest nodes in the previous rounds.
\item \emph{Semi-cautious adversary}: 
Any adversarial node does not give \emph{contradicting} responses (i.e., $0$ to one node
and~$1$ to another node in the same round) but does not have to answer all queries. The adversary knows the opinions of the honest nodes in the previous rounds.
 \item \emph{Berserk adversary}:
 Any adversarial node may respond differently to different queries in the same round. At each moment of time, the adversary is not only aware of the current opinions of all honest nodes but may also know which nodes query which other nodes.
\end{itemize}
The berserk adversary is the worst-case scenario; we pose no limitation on the node's behavior and allow it to be omniscient, i.e.,  it knows all information that exists until \emph{now}. However, it is not \emph{prescient} nor possesses prior knowledge or influence on the random threshold. 

Let us explain why the adversary may choose to be cautious or semi-cautious. We assume that nodes have identities and sign all their messages; this way, one can always \emph{prove} that a given message originates from a given node.
Now, if a node is not cautious, 
this may be detected by the honest nodes, e.g., two  honest nodes may exchange their query history and verify that the same node passed contradicting information to them.
In such a case, the offender may be penalized
by the honest nodes. For example, the nodes who discovered
the fraud would pass that information along with the relevant proof and the other honest nodes would stop querying this particular node.

The semi-cautious adversary has slightly more freedom in \emph{lying} (in fact, in not saying anything) than a cautious adversary. We will see later that this additional freedom has an impact on the security of the system.


\subsection{Theoretical results}
\label{s_theory_fpc}
The theoretical results on FPC found in~\cite{fpc} are obtained under various assumptions. We discuss
the necessity of these assumptions in Section~\ref{sec:discussion}. 
In this section we assume that every node knows all other participants, that it can  \emph{directly} query any other node, and that honest nodes always reply in due time. In addition, we assume the existence of a common source of randomness that delivers a fresh random threshold in every round.

A complete formulation and proof of the theoretical results of~\cite{fpc} would require some more notations and end up being rather technical. 
For this reason, we present several results from~\cite{fpc} that stand by way of example for the more general and more technical results. 

Let us define two events relative to the final consensus
value:
\begin{equation}
 H_i = \{ \text{all honest nodes eventually reach
   final opinion } i\},
\end{equation}
$i=0,1$.
Thus, the union $H_0\cup H_1$ stands for the event
that all honest nodes agree on the same value,
i.e., that the consensus was achieved.

Let $\mathcal{N}$ be the number of rounds until 
\emph{all} honest nodes achieve their final opinions.
In the following result
(which is a corollary of Theorems~4.1 and~6.1
of~\cite{fpc}), we obtain a lower bound 
on the probability of the event
$(H_0\cup H_1)\cap\{\mathcal{N}\leq m_0+\ell \}$.
This event means that the consensus was achieved 
in the shortest possible time~$m_0+\ell$. 

\begin{theo}
\label{t_main}
\begin{itemize}
 \item[(i)] In the case of a cautious adversary, assume that
 $q<1/2$. Then, for any fixed $\beta\in (q,1/2)$
 we have, with $c_{0,1,2}>0$ depending on~$\beta$ and~$q$
 \begin{align}
\lefteqn{
\IP\big[(H_0\cup H_1)\cap\{\mathcal{N}\leq m_0+\ell \}\big]
 } \nonumber\\
& \geq 1-c_0n\ell \exp(-c_1 k) - \exp(-c_2 m_0\ln k) .
 \label{eq_safety_liveness_cautious} 
\end{align}
\item[(ii)] In the case of a semi-cautious adversary, assume that
 $q<\frac{3-\sqrt{5}}{2}$.
 Then,\footnote{Note that  $\frac{3-\sqrt{5}}{2}=\frac{1}{1+\phi}=\phi^{-2}\approx 0.38$,
 where $\phi=\frac{1+\sqrt{5}}{2}$ is the Golden Ratio.} 
 for any fixed $\beta\in (q,\frac{1-q}{2-q})$
 we have, with $c_{0,1,2}>0$ depending on~$\beta$ and~$q$
 \begin{align}
\lefteqn{
\IP\big[(H_0\cup H_1)\cap\{\mathcal{N}\leq m_0+\ell \}\big]
 } \nonumber\\
& \geq 1-c_0 n\ell \exp(-c_1 k) - \exp(-c_2 m_0) .
 \label{eq_safety_liveness_semi} 
\end{align}
 \item[(iii)] In the case of a berserk adversary, assume that
 $q<1/3$. Then, for any fixed $\beta\in (q,\frac{1-q}{2})$
 we have, with $c_{0,1,2}>0$ depending on~$\beta$ and~$q$
 \begin{align}
\lefteqn{
\IP\big[(H_0\cup H_1)\cap\{\mathcal{N}\leq m_0+\ell \}\big]
 } \nonumber\\
& \geq 1-c_0n\ell \exp(-c_1 k) - \exp(-c_2 m_0) .
 \label{eq_safety_liveness_berserk} 
\end{align}
\end{itemize}
\end{theo}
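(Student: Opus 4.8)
The plan is to reduce the whole analysis to the scalar process $p_t$, the fraction of \emph{honest} nodes holding opinion~$1$ at the end of round~$t$, and to bound the complements of three events whose intersection forces $(H_0\cup H_1)\cap\{\mathcal{N}\le m_0+\ell\}$: a \emph{concentration} event (sampling fluctuations stay below a fixed margin $\gamma(k)\to 0$), a \emph{reaching} event (honest consensus, $p_t\in\{0,1\}$, is attained within the cooling-off period of $m_0$ rounds), and a \emph{maintenance} event (once attained, consensus survives through round $m_0+\ell$). The first round, which uses the special threshold $U_1\sim\mathrm{Unif}([a,b])$, plays no role beyond fixing $p_1$, since no node can finalize before round $m_0$; all quantitative work concerns rounds $t\ge 1$ with $U_t\sim\mathrm{Unif}([\beta,1-\beta])$.

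For the concentration step, fix an honest node~$i$ and a round~$t$. Of its $k$ queried peers, the number $b_i$ hitting adversarial nodes is a sum of essentially independent indicators with mean $\approx qk$, and, given the sampled honest peers, the number of sampled honest $1$-opinions concentrates around $(1-b_i/k_i)\,p_t\,k_i$. A Chernoff estimate then gives $\eta_i(t+1)=(1-q_i)p_t+q_i\hat a_i+O(\gamma(k))$ with $q_i:=b_i/k_i\approx q$, where $\hat a_i\in[0,1]$ is the effective adversarial $1$-fraction presented to~$i$, outside an event of probability $\le c\exp(-c'k)$; a union bound over the honest nodes and all $\le m_0+\ell$ relevant rounds produces a term $\le c_0 n\ell\exp(-c_1 k)$ of the claimed form. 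On the complementary \emph{good} event all honest values $\eta_i(t+1)$ lie in an interval $[L_t,R_t]$ whose width reflects how much the adversary can make $\hat a_i$ vary with~$i$: $O(\gamma(k))$ for a cautious adversary (its responses are globally consistent), at most $q+O(\gamma(k))$ for a berserk one (each $\hat a_i$ chosen freely, even knowing the query graph), and an intermediate size for a semi-cautious one (selective silence pulls $\eta_i$ up to $\approx p_t$ for chosen~$i$ and down to $\approx(1-q)p_t$ for the rest).

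The maintenance step is then immediate: if $p_t=1$ every honest peer reports~$1$, so on the good event $\eta_i(t+1)\ge(k_i-b_i)/k_i\ge 1-q-\gamma(k)>1-\beta\ge U_t$ (using $\beta>q$ and $k$ large), so no honest node leaves opinion~$1$; by induction consensus persists to round $m_0+\ell$, where the local stopping rule forces every honest node to finalize on that value — symmetrically for~$0$. It remains to bound the probability that consensus is not reached within $m_0$ rounds, and this is where the three regimes live. On the good event, if $U_t<L_t$ every honest node switches to~$1$ and if $U_t>R_t$ every honest node switches to~$0$, so $p_{t+1}\in\{0,1\}$; since $U_t$ is drawn \emph{after} the adversary commits its round-$t$ responses, the conditional probability of this favourable event is $\ge 1-(R_t-L_t)/(1-2\beta)$ (and larger when $[L_t,R_t]$ protrudes past $\beta$ or $1-\beta$). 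For a cautious adversary $R_t-L_t=O(\gamma(k))$, so a round misses consensus with probability $O(\gamma(k))$ and, by a conditioning argument on the past in the style of Corollary~\ref{c_game_duration}, all $m_0$ rounds miss with probability $(O(\gamma(k)))^{m_0}=\exp(-c_2 m_0\ln k)$, giving~\eqref{eq_safety_liveness_cautious}. For a berserk adversary $[L_t,R_t]\approx[(1-q)p_t,(1-q)p_t+q]$, which covers all of $[\beta,1-\beta]$ — hence blocks \emph{every} round with probability~$1$ — exactly when $\beta\ge\frac{1-q}{2}$; for $\beta<\frac{1-q}{2}$, possible together with $\beta>q$ precisely when $q<1/3$, a fixed $k$-independent fraction $\rho>0$ of $[\beta,1-\beta]$ stays safe every round, so the $m_0$ rounds miss with probability $(1-\rho)^{m_0}=\exp(-c_2 m_0)$, giving~\eqref{eq_safety_liveness_berserk}. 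The semi-cautious case is structurally identical: the worst obstruction gives all adversarial nodes intent~$0$ and, node by node, either answers ($\eta_i\to(1-q)p_t$) or stays silent ($\eta_i\to p_t$), covering $[\beta,1-\beta]$ exactly when $\beta\ge\frac{1-q}{2-q}$; for $\beta<\frac{1-q}{2-q}$ (compatible with $\beta>q$ iff $q<\frac{3-\sqrt{5}}{2}$) one again gets a $k$-independent escape probability and~\eqref{eq_safety_liveness_semi}. A union bound over the concentration and reaching events then gives the claimed estimate in each case.

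The hard part is this reaching step for the berserk and semi-cautious adversaries: one must pin down the exact worst-case spread $R_t-L_t$ as a function of $q$ and~$\beta$ (taking into account a berserk adversary's knowledge of the query graph, and a semi-cautious one's ability to stay silent selectively) and show that below the stated thresholds it leaves a positive, $k$-independent portion of $[\beta,1-\beta]$ unblockable — precisely the computation that produces $q<1/3$ and $q<\frac{3-\sqrt{5}}{2}$ and the admissible ranges for~$\beta$. A secondary subtlety is that the adversary is adaptive over rounds and may try to keep $p_t$ parked inside the danger zone, so the per-round escape bound must be proved conditionally on the whole history rather than by appealing to independence.
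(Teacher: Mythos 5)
Your overall strategy coincides with the one sketched in the paper (which itself defers the full details to~\cite{fpc}): reduce to the honest $1$-fraction, split the failure event into a concentration part (giving $c_0 n\ell e^{-c_1k}$ by Chernoff plus a union bound), a ``reaching'' part controlled by the probability that the fresh threshold~$U_t$ lands outside the adversary-controlled interval of $\eta$-values, and a maintenance part using $\beta>q$. Your derivations of the worst-case intervals $[(1-q)p_t,(1-q)p_t+q]$ for the berserk adversary and of the covering conditions yielding the thresholds $1/3$ and $\frac{3-\sqrt5}{2}$ match the paper's reasoning (your one-sided semi-cautious strategy with $p_t$ parked off-centre is an equivalent reparametrization of the paper's symmetric one at $p_t\approx\frac12$).

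There is, however, a genuine gap in the cautious case, where the claimed exponent is $c_2m_0\ln k$ rather than $c_2m_0$. Your single margin $\gamma(k)$ is asked to do two incompatible jobs: for the event ``all honest $\eta_i$ lie within $\gamma(k)$ of their common mean'' to fail with probability $e^{-c'k}$ per node (so that the union bound gives $c_0n\ell e^{-c_1k}$), Chernoff forces $\gamma(k)$ to be bounded below by a constant; but to get a per-round miss probability $O(k^{-1/2})$ --- whence $(ck^{-1/2})^{m_0}=\exp(-c_2m_0\ln k)$ --- you need $R_t-L_t=O(k^{-1/2})$. Worse, with $n$ large the event that \emph{all} honest nodes' $\eta_i$ fall in a window of width $O(k^{-1/2})$ has probability far from~$1$ (each node misses it with probability of constant order), so your ``reaching $=$ exact consensus $p_{t+1}\in\{0,1\}$ in one favourable round'' framing cannot deliver the $\ln k$. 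As written, your argument proves only $\exp(-c_2m_0)$ for the cautious case. The paper's route avoids this by defining the target of the reaching phase as the honest \emph{proportion} $\hp_m$ becoming sufficiently extreme (the stopping time~$\Psi$ of~\eqref{def_Psi}), which only requires the \emph{bulk} of the $\eta_i$'s --- not every single one --- to land on the favourable side of~$U_t$; the stragglers are then absorbed into the maintenance/agreement term. Relatedly, your good event must hold over all $m_0+\ell$ rounds, giving $n(m_0+\ell)e^{-c_1k}$ rather than the stated $n\ell e^{-c_1k}$; the paper's decomposition needs concentration only for the $\ell$ rounds after~$\Psi$. Both issues are repairable, but they require replacing your uniform-over-nodes concentration in the reaching phase by a typical-value (CLT/Berry--Esseen scale) estimate on the empirical distribution of the $\eta_i$'s.
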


It is also important to observe that the above theoretical
estimates (as well as those of~\cite{fpc})
are not necessarily sharp; this explains why we keep 
the possibility of using a generic~$\ell$ even
though $\ell=1$ would be the most ``favorable''
for the above 
estimates~\eqref{eq_safety_liveness_cautious}--\eqref{eq_safety_liveness_semi};
in practice a higher value of~$\ell$ may work better (as shown in~\cite{fpcsim}).

Asymptotic results are of interest in evaluating the scalability of the protocol. 
In addition, the next corollary gives more quantitative details on the probability of terminating in an agreement.
\begin{cor}\label{cor:asympt}
\label{c_log_n}
Let $\beta=1/3$ and $\ell$ be some constant. We assume that the proportion of Byzantine nodes~$q$
is \emph{acceptable}, i.e., less than~$1/2$ for 
the case of a cautious adversary, less than $\frac{3-\sqrt{5}}{2}$ for a semi-cautious adversary,  or less than~$1/3$
for the case of a berserk adversary. Then, there exists some constant $C$ such that for $k=C \log n$ and
\begin{itemize}
    \item $m_0 = O\big(\frac{\ln n}{\ln \ln n}\big)$ for the cautious adversary,  
    \item $m_0=O(\ln n)$ for the semi-cautious and berserk adversary,
\end{itemize}
the probability of terminating in agreement is at least $1-\eps(n)$, where~$\eps(n)$
is polynomially small in~$n$ (i.e., $\eps(n) = O(n^{-h})$ for 
some~$h>0$, and this~$h$ can be made arbitrarily large by choosing a suitable~$C$). 

In particular, the overall communication complexity is at most
$O\big(\frac{n \ln^2 n}{\ln \ln n}\big)$
for a cautious adversary and $O(n \ln^2 n)$ for a berserk one.
\end{cor}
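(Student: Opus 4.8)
The corollary is an asymptotic specialization of Theorem~\ref{t_main}: all the probabilistic content is already contained there, so what remains is to choose $k$ and $m_0$ as functions of~$n$ and to carry out the attendant bookkeeping with logarithms. The plan is as follows. First I would record the elementary inclusion
\[
 (H_0\cup H_1)\cap\{\mathcal{N}\le m_0+\ell\}\ \subseteq\ \{\text{terminates in agreement}\},
\]
valid as soon as the hard deadline obeys $\Delta\ge m_0+\ell$ (which we may assume): on this event all honest nodes agree because of $H_0\cup H_1$, and all of them have finalized by round $m_0+\ell\le\Delta$ because of $\{\mathcal{N}\le m_0+\ell\}$, so neither an agreement nor a termination failure occurs. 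Consequently it suffices to show that, under the stated choices of $k$ and $m_0$, the lower bounds in Theorem~\ref{t_main} are of the form $1-O(n^{-h})$ with $h$ as large as we wish.

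Next I would fix an admissible threshold~$\beta$ for the adversary class at hand (one may take $\beta=1/3$ whenever this lies in the interval prescribed by the corresponding item of Theorem~\ref{t_main}, which is the case once~$q$ is small enough; in any event the constants $c_0,c_1,c_2>0$ depend only on~$\beta$ and~$q$) and substitute $k=C\log n$. Since $\exp(-c_1k)=n^{-c_1C}$, the first error term obeys $c_0n\ell\exp(-c_1k)=c_0\ell\,n^{1-c_1C}=O(n^{-h})$ once $C>(h+1)/c_1$. For the second error term I treat the two shapes appearing in Theorem~\ref{t_main} separately. In the cautious case it equals $\exp(-c_2m_0\ln k)$, and with $k=C\log n$ one has $\ln k=\ln C+\ln\ln n\sim\ln\ln n$; hence the choice $m_0=a\,\ln n/\ln\ln n$ gives $m_0\ln k\ge\tfrac{a}{2}\ln n$ for all large~$n$, so this term is $O(n^{-c_2a/2})$. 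In the semi-cautious and berserk cases the second term is simply $\exp(-c_2m_0)$, so $m_0=a\ln n$ gives $O(n^{-c_2a})$. Enlarging $C$ and~$a$ then makes the combined error $\eps(n)=O(n^{-h})$ for any prescribed~$h$, which is the first assertion.

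For the communication-complexity claim, I would observe that on $\{\mathcal{N}\le m_0+\ell\}$ every honest node has stopped querying by the end of round $m_0+\ell$, so the honest nodes together issue at most $k\,n\,(m_0+\ell)$ queries; plugging in $k=C\log n$, $\ell=O(1)$, and $m_0=O(\ln n/\ln\ln n)$ (cautious) or $m_0=O(\ln n)$ (semi-cautious and berserk) yields $O\big(n\ln^2 n/\ln\ln n\big)$ and $O(n\ln^2 n)$ respectively. On the complementary, polynomially small, event the protocol still halts by round~$\Delta$ and contributes only a lower-order correction to the ``overall'' count.

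The step deserving the most care is the logarithmic bookkeeping in the cautious case: one must exploit that $\ln k\asymp\ln\ln n$ when $k\asymp\log n$, so that the \emph{a priori} weaker decay $\exp(-c_2m_0\ln k)$ still produces a polynomially small error already at $m_0=\Theta(\ln n/\ln\ln n)$ --- which is exactly why the cautious bound tolerates a smaller $m_0$, hence a smaller communication complexity, than the semi-cautious and berserk ones. One should also confirm that the admissible interval for~$\beta$ in each item of Theorem~\ref{t_main} is nonempty under the stated notion of ``acceptable''~$q$ (and contains~$1/3$ for~$q$ small enough), so that suitable constants $c_0,c_1,c_2$ are genuinely available to feed into the estimates above.
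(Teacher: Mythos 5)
Your proposal is correct and follows exactly the route the paper intends: the corollary is obtained by substituting $k=C\log n$ and the stated $m_0$ into the bounds of Theorem~\ref{t_main}, noting that $\ln k\asymp\ln\ln n$ rescues the weaker cautious-case decay, and counting $nk(m_0+\ell)$ queries for the complexity claim. Your caveat that $\beta=1/3$ must actually lie in the admissible interval of the relevant item (which fails for $q\in[1/3,1/2)$ in the cautious case, so one must then pick another admissible $\beta$) is a legitimate and correctly handled fine point that the corollary's statement glosses over.
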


Now, we outline the strategy of the proof of the above results.
Let~$\hp_m$ be the proportion of $1$-opinions
among the honest nodes after the $m$th round.\footnote{This notation
does not have anything to do with $\hp_m$s of Section~\ref{s_enter_Byz}.}
Let us define the random variable
\begin{equation}
\label{def_Psi}
 \Psi = \min\big\{m\geq 1: \hp_m\leq \tfrac{\beta-q}{2(1-q)}
 \text{ or }\hp_m\geq 1-\tfrac{\beta-q}{2(1-q)}\big\}
\end{equation}
to be the round after which 
the proportion of $1$-opinions among the honest
nodes either becomes ``too small'', or ``too large''.\footnote{One
may think of~$\Psi$ as the first time the systems enters
the ``very steep'' part of the potential, 
as explained in the previous section.}
The idea is that, when~$\Psi$ happens, it then becomes extremely likely 
that all~$(1-q)n$ honest nodes will maintain the same 
opinion (which will be $0$ if the first condition
in~\eqref{def_Psi} occurs or $1$ if the second one does)
during the~$\ell$ consecutive rounds needed for reaching consensus. 
It is here that the fact that $q<1/2$ is important: when 
the honest nodes are almost in agreement, the adversarial nodes
do not have the necessary voting power to convince them otherwise.
In fact, it is also important that the parameter~$\beta$
is chosen in such a way that $\beta>q$:
in the situation when almost all honest nodes agree,
the adversary must not 
be able to reach the interval $[\beta,1-\beta]$ where
the decision thresholds live.
As the reader probably figured out, the first negative
term in the right-hand sides 
of~\eqref{eq_safety_liveness_cautious}--\eqref{eq_safety_liveness_semi}
correspond to the probability that at least 
one node of the system ``jumps out''
of the consensus state during the~$\ell$ finalization rounds (in particular, the factor $n\ell$ there comes from the union bound).
We do not elaborate further in this paper; a rigorous derivation
of the corresponding bounds was done in~\cite{fpc}.

Now let us explain the origin of the second 
negative terms in the right-hand sides 
of~\eqref{eq_safety_liveness_cautious}--\eqref{eq_safety_liveness_semi};
those come from the tail estimates on~$\Psi$
-- the probabilities that~$\Psi$ did not happen during the~$m_0$ preliminary rounds. In the following, we elaborate on this
for cautious, berserk, and semi-cautious adversaries
(and also explain the reasoning behind the corresponding
security thresholds $1/2$, $1/3$, and $\phi^{-2}=\frac{3-\sqrt{5}}{2}$).
To simplify the arguments, we assume that the honest nodes' opinions
are roughly equally split (that is, approximately $\frac{1}{2}(1-q)n$
of those have current opinion~$0$ and approximately $\frac{1}{2}(1-q)n$
 have current opinion~$1$), and the adversary is attempting to maintain
this 50/50 split.

\paragraph{Cautious adversary} This is the simplest case since the adversary essentially has to choose the opinions 
of the nodes it controls beforehand; this in turn means
that the \emph{expected} proportions of $1$-opinions that honest 
nodes receive will be the same, and therefore the 
\emph{actual} proportions of $1$-opinions they receive
will likely be distributed on an interval of length~$O(k^{-1/2})$
because of the Central Limit Theorem
(we refer the reader to Figure~2 of~\cite{fpc}).
To maintain the opinions split, the next (not yet known to the adversary)
decision threshold has to be in that interval; this happens
with a probability of order~$k^{-1/2}$.
Therefore, this explains the second negative term in the right-hand 
side of~\eqref{eq_safety_liveness_cautious}
(note that $(ck^{-1/2})^{m_0}=\exp(-\frac{1}{2}m_0(\ln k -2\ln c))$).
Note also that here we do not need further restrictions
on~$q$ (other than $q<1/2$). 

\paragraph{Berserk adversary} In this case, the adversary can 
adopt for example the following strategy: feed $0$-opinions to one-half
of the honest nodes, and $1$-opinions to the other half.
Then the honest nodes from the first half will receive 
around $\frac{1}{2}(1-q)$ proportion of $1$s (because they 
will only receive $1$-opinions from honest nodes),
while the proportion of $1$s that
those from the second half will receive 
will be around $\frac{1}{2}(1-q)+q=\frac{1}{2}(1+q)$.
Then, to assure that~$\Psi$ happens with at least
a positive constant probability\footnote{This gives rise to 
the second negative term in the right-hand side of~\eqref{eq_safety_liveness_berserk};
as before, it corresponds to the probability of~$\Psi$ not happening during~$m_0$
consecutive rounds.},
we need $[\frac{1}{2}(1-q),\frac{1}{2}(1+q)]$
to be a proper subset of $[\beta, 1-\beta]$
(since in this case with uniformly positive probability
the decision threshold will be outside of 
$[\frac{1}{2}(1-q),\frac{1}{2}(1+q)]$, and therefore most 
of the honest nodes will adopt the same opinion in the next 
round).
This means that, besides the initial restriction~$q<\beta$,
we also need to have $\frac{1}{2}(1-q)>\beta$, that is,
$\beta\in (q,\frac{1}{2}(1-q))$. The last interval
must be nonempty meaning that $q<\frac{1}{2}(1-q)$,
which is equivalent to $q<1/3$.

\paragraph{Semi-cautious adversary} 
Recall that we are assuming that the current opinions
of honest nodes are roughly equally split; we will also
consider only a \emph{symmetric} adversarial strategy:
$\frac{1}{2}qn$ adversarial nodes will answer~$0$
or remain silent while the remaining $\frac{1}{2}qn$ 
adversarial nodes will answer~$1$
or remain silent
(the general case is treated in Section~6 of~\cite{fpc}).
Let us see what happens if the adversary adopts a similar strategy
as considered above for a berserk one: to  half the honest nodes
the adversarial nodes will answer~$0$ (if possible) or remain 
silent, while to the other half they 
will answer~$1$ (if possible) or remain silent.
Note that a random query of any honest node will be answered
with probability $1-q+\frac{q}{2}=1-\frac{q}{2}$ in this setup.
Then, given that an honest node from the first half receives
an answer, this answer will be
\[
 \begin{cases}
  0, & \text{with probability } 
  \frac{\frac{1}{2}(1-q)+\frac{q}{2}}{1-\frac{q}{2}}=\frac{1}{2-q},\\
  1, & \text{with probability } 
  \frac{\frac{1}{2}(1-q)}{1-\frac{q}{2}}=\frac{1-q}{2-q},
 \end{cases}
\]
and, likewise, given that an honest node from the second half receives
an answer, this answer will be~$0$ with probability~$\frac{1-q}{2-q}$
and~$1$ with probability~$\frac{1}{2-q}$.
So, roughly speaking, the adversary can achieve the following:
the proportion of $1$-opinions received by the honest nodes from the first half
will be around\footnote{As usual, with random fluctuations typically of 
order~$k^{-1/2}$.} 
$\frac{1-q}{2-q}$, while 
the proportion of $1$-opinions received by the honest nodes from the second half
will be around~$\frac{1}{2-q}$. 
Therefore, similarly to the berserk case, we need the 
interval $[\frac{1-q}{2-q},\frac{1}{2-q}]$ to be a proper subset 
of~$[\beta,1-\beta]$ which (again, together with the condition $q<\beta$)
shows that the inequality $q<\frac{1-q}{2-q}$ must hold.
Solving this inequality for $q\in [0,1]$, 
we obtain~$q<\frac{3-\sqrt{5}}{2}$.

Let us also comment on the cooling-off period~$m_0$: 
it was initially introduced in~\cite{fpc} with the following
idea in mind:
let us first give some time to~$\Psi$ to happen,
and then the nodes can reach the consensus in safety. 
In fact, roughly the same asymptotic results as in Theorem~\ref{t_main}
and Corollary~\ref{c_log_n} can be proved also for 
the protocol's version with no cooling-off; specifically,
for the probability 
$\IP\big[(H_0\cup H_1)\cap\big\{\mathcal{N}\leq \frac{3}{2}\ell \big\}\big]$ the same estimates with $m_0$ substituted to~$\ell$
(and possibly different constants) will be valid. To see that, first observe 
that the probability that~$\Psi$ does not happen during the first~$\tfrac{\ell}{2}$ rounds will be at most 
$\exp\big(c\tfrac{1}{2}\ell \ln k\big)$ for the cautious 
adversary and 
at most $\exp\big(c\tfrac{1}{2}\ell \big)$ for
the semi-cautious and berserk ones
(recall the last terms in~\eqref{eq_safety_liveness_cautious}--\eqref{eq_safety_liveness_berserk}).
Then, every honest node would need to decide on the majority
opinion between $\tfrac{\ell}{2}$ and~$\ell$ consecutive times;
the middle terms in the right-hand sides 
of~\eqref{eq_safety_liveness_cautious}--\eqref{eq_safety_liveness_berserk} still are upper bounds to the probability
that the above does not happen.

\subsection{Numerical results}
\label{s_numerical_fpc}
The theoretical results are strong in the sense that they cover all possible adversary strategies and that they give asymptotic results. However, the achieved bounds are not optimal and the actual performances of FPC seem to be much better. In this section, we highlight some results obtained by Monte-Carlo simulations that illustrate the performances of FPC and also indicate that most of the strict assumptions for the theoretical results can be considerably weakened. We also describe a  concrete berserk strategy that is able to break the usual majority-dynamics without the random threshold.

FPC is governed by many parameters and allows users to adjust the protocol in many ways. In~\cite{manaFPC} various different parameter setting are studied with respect to the protocol performances. The authors of \cite{manaFPC} proposed a simplified model and removed the randomness of the initial threshold and the cooling-off phase. While the theoretical results of~\cite{fpc} show that the cooling-off phase may reduce the minimum number of voting rounds required for termination, the asymptotic results in Corollary~\ref{cor:asympt} states that $m_0$ can be chosen as $O(\log n)$. Moreover, the proof strategy in~\cite{fpc} and the discussion at the end of the previous section suggest that the theoretical results remain valid in the absence of the cooling-off phase.  
While in the original FPC, the initial threshold is randomly chosen in an interval $[a,b]$, we want to emphasize that the theoretical results in~\cite{fpc} remain valid for a deterministic threshold and that the randomness in the subsequent rounds is sufficient for the protocol's robustness. For these reasons, we set in this section $a=b=\tau$. 

As before, we assume that a proportion of $q\in[0,1)$ nodes are malicious and try to interfere with the protocol.
As mentioned earlier  we distinguish between  different kinds of adversarial behavior. In the following, we give two explicit examples of possible adversary strategies.

\paragraph{Cautious strategy for agreement and termination failure}
We consider the cautious strategy where the adversary transmits in round $t+1$ the opinion of the minority of the honest nodes of round $t$. This is the ``help-the-weakest'' strategy of Section~\ref{s_enter_Byz}. 
In~\cite{fpcsim} this strategy is dubbed the \textit{inverse vote strategy} (\textit{IVS}).

\paragraph{Berserk strategies for agreement and termination failure}

We consider the berserk strategy known as \textit{maximal variance strategy} (\textit{MVS}). In this approach, the adversary waits until all honest nodes received opinions from all other honest nodes. The adversary then tries to subdivide the honest nodes into two equally sized groups of different opinions while trying to maximize the variance of the $\eta$-values (recall Section~\ref{s_notation_fpc}).

In order to maximize the variance, the adversary requires knowledge over the $\eta$-values of the honest nodes at any given time. The adversary then answers queries of undecided nodes in such a way that the variance of the $\eta$'s is maximized by keeping the median of the $\eta$'s close to $0.5$; 
we refer to~\cite{fpcsim} for a pseudo-code of the attack strategy. Intuitively, this strategy tries to make the central well in the upper left potential in Figure~\ref{f_potential_Byz} as deep as possible.

The effectiveness of the random threshold is clearly seen when comparing the evolution of the different $\eta$ values for each node.  Figure~\ref{fig:eta-heatmap2},  from~\cite{manaFPC}, shows the different evolution of IVS and MVS. In the two upper graphs, the threshold is deterministic, i.e., $\beta=0.5$. We can see in a) that the cautious adversary following IVS with $q=0.3$ can maintain the system in a metastable situation for some time, but eventually converges to the all $0$ situation.  In b) the berserk MVS strategy with $q=0.1$ can keep the system in a meta-stable situation for the whole duration of the simulation. Part c) shows the effectiveness of the random threshold; the berserk attacker looses control very fast. 

\begin{figure}
\begin{center}
  \includegraphics[width=.9\textwidth]{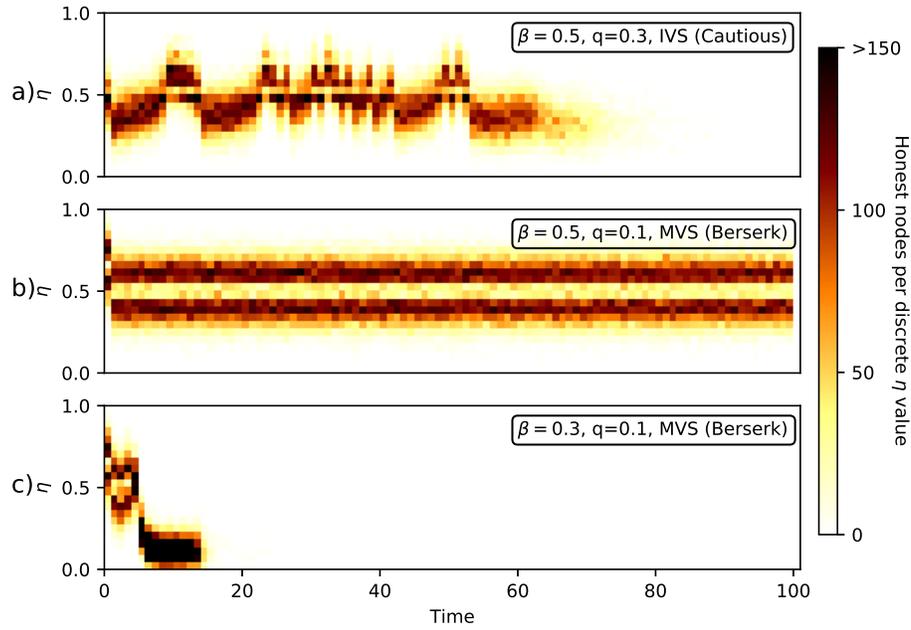}
  \vspace{-0.5cm}
  \caption{Evolution of the number of undecided nodes for a given $\eta$-value; \cite{fpcsim}.}
  \label{fig:eta-heatmap2}
  \end{center}
\end{figure}

Figures \ref{fig:q-beta-T} and \ref{fig:q-beta-A} show the variation of the termination and agreement rate with $q$ and $\beta$. The berserk attacker follows the MVS; and the network size is $n=1000$, $a=b=2/3$ and $\ell=10$. The maximal number of rounds is set to $100$. We consider the worst-case scenario $\hp_0=a=b$.  In cases where $\hp_0$ differs from the initial threshold, the performances of FPC are significantly better. Recall that, if the protocol already starts in a ``steep'' part of the ``potential'' an attacker has practically no influence. It can be seen that if no randomness is employed the protocol can be prevented from terminating. Introducing randomness via decreasing $\beta$ improves the termination as well as the agreement rate. However, the figures also show that too much randomness can harm the performances.  
\begin{figure}
\begin{center}
\begin{minipage}{0.45\textwidth}
    \includegraphics[width=1.3\textwidth]{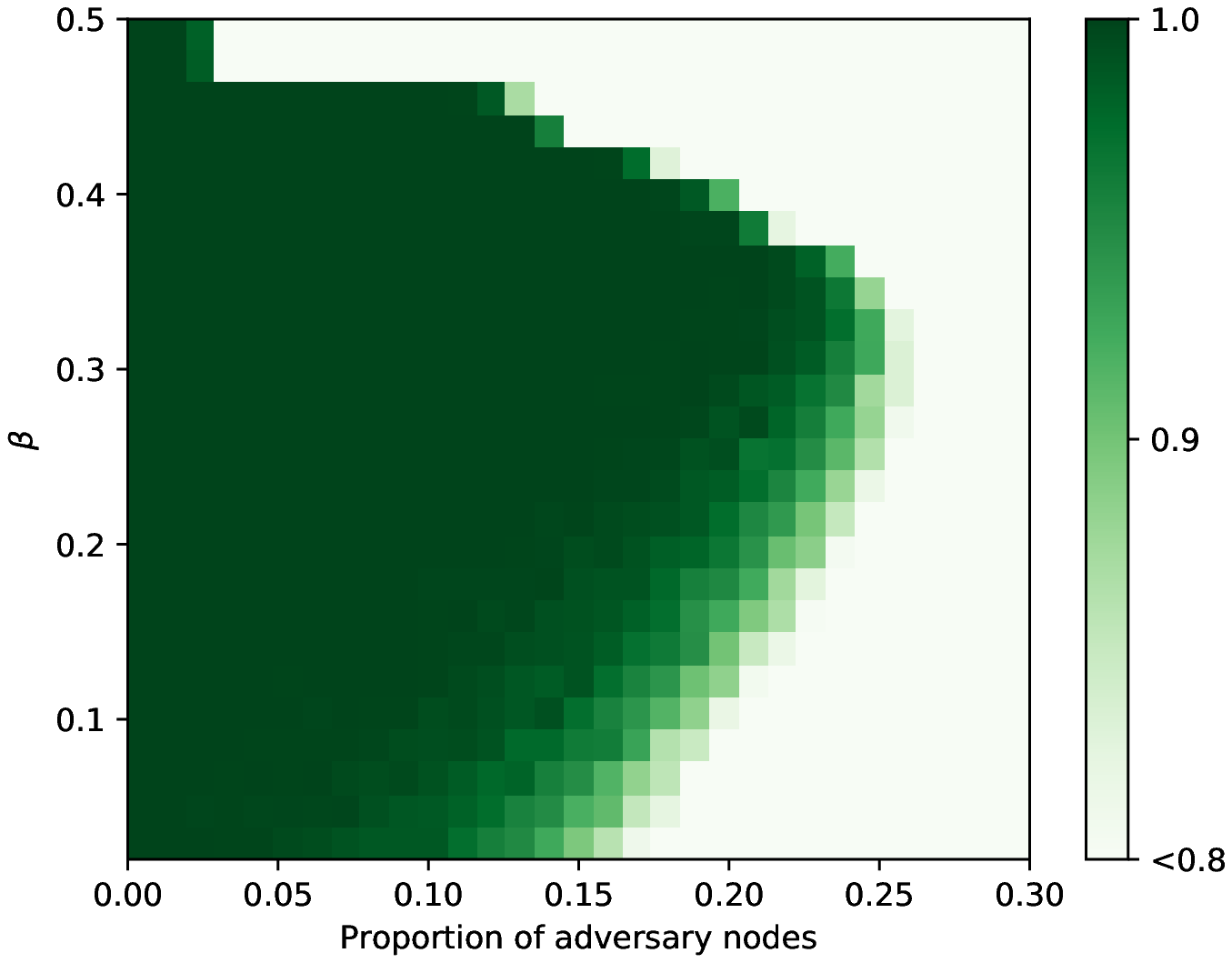}
    \vspace{-0.5cm}
    \caption{Termination rate; \cite{fpcsim}.}
    \label{fig:q-beta-T}
\end{minipage}\hfill
\begin{minipage}{.45\textwidth}
    \hspace{-1.2cm}
    \includegraphics[width=1.3\textwidth]{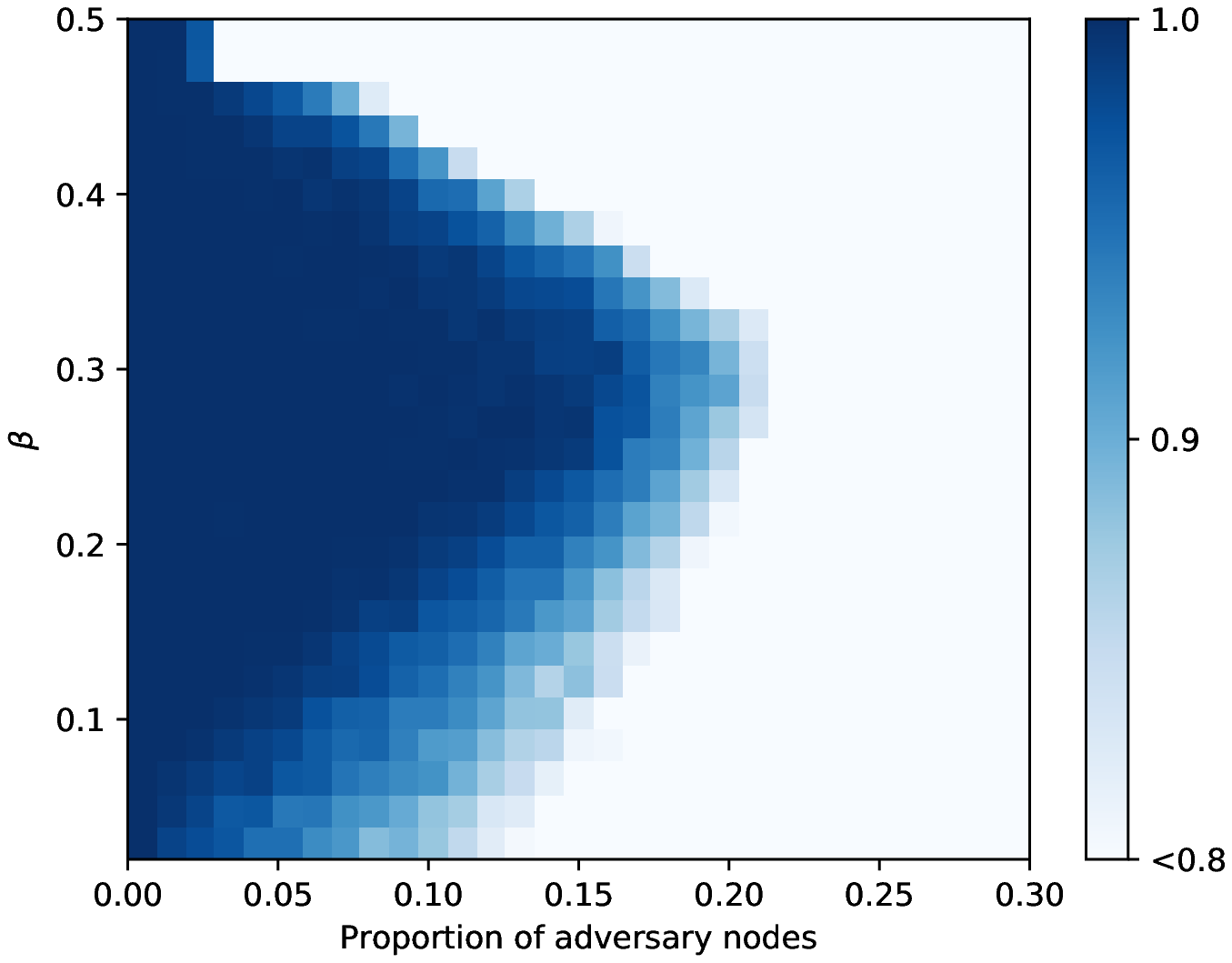}
    \vspace{-0.5cm}
    \caption{Agreement rate; \cite{fpcsim}.}
    \label{fig:q-beta-A}
\end{minipage}\hfill
\end{center}
\end{figure}

\subsection{Discussion}\label{sec:discussion}
In the remaining part of this section, we discuss which assumptions are necessary for FPC.

\paragraph{Random thresholds}
FPC relies in its construction on a (decentralized) random number generator that produces periodically random thresholds to break metastable situations. A first natural question is therefore how the robustness of FPC depends on the  robustness of the random numbers. In fact, the answer is that FPC is rather robust towards failures of the external randomness. To see this, it is good to remind oneself of the functionality of the random threshold: it serves the consensus protocol to escape meta-stable states. For this purpose, it is sufficient that from time to time some sufficiently large proportion agrees on a random threshold. Let us make this more precise. We need that, regardless of the past, 
with probability at least~$\theta$ (where $\theta>0$ is a fixed parameter)
the next outcome is a uniform random variable
which is unpredictable for the adversary. 
This random number is seen by at least 
$(1-\delta)$ proportion of honest nodes, 
where~$\delta$ is reasonably small. 
What we can prove in such a situation depends on what
the remaining $\delta(1-q)n$ honest nodes use as their decision thresholds: they can use some second candidate (in case there is an alternative 
source of common randomness), or they can choose their thresholds independently and randomly, etc.
Each of such situations would need to be treated separately, which is certainly doable but left out of this paper.
Let us note, though, that
the worst-case assumption 
is that the adversary can ``feed'' the (fake) decision thresholds to those $\delta(1-q)n$ honest
nodes. This 
would effectively mean that (at worst) these 
nodes would behave as cautious adversaries in the next round. This only matters if the random time~$\Psi$ did not yet occur.
Therefore, to obtain bounds 
like~\eqref{eq_safety_liveness_cautious}--\eqref{eq_safety_liveness_berserk} 
and~\eqref{eq_safety_liveness_semi} we 
can simply pretend that the value of~$q$ is increased by~$\delta$.

Assuming that $\delta=0$, it is easy to
figure out how this will affect our results: indeed,
in our proofs, all random thresholds matter only until~$\Psi$.
We can  obtain the following fact:
\begin{prop}
\label{p_weaker_RNG}
Assume the above on the random number generation
(with $\theta\in (0,1)$ and $\delta=0$). Then, 
the estimates 
\eqref{eq_safety_liveness_cautious}--\eqref{eq_safety_liveness_berserk} 
and~\eqref{eq_safety_liveness_semi}
remain valid (although, possibly, with other constants).
\end{prop}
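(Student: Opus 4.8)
The plan is to re-examine the proof of Theorem~\ref{t_main} and locate the single place where the quality of the random thresholds enters, then patch that step. Recall that the right-hand sides of \eqref{eq_safety_liveness_cautious}--\eqref{eq_safety_liveness_berserk} and \eqref{eq_safety_liveness_semi} split into two contributions: the factor $c_0 n\ell\exp(-c_1 k)$, which bounds the chance that some honest node leaves the (near-)consensus configuration during one of the~$\ell$ finalization rounds after~$\Psi$ has occurred, and the $m_0$-term, which bounds $\IP[\Psi>m_0]$. My first observation is that the finalization estimate uses \emph{only} $\beta>q$ together with the fact that thresholds lie in $[\beta,1-\beta]$: once $\hp_m\le\tfrac{\beta-q}{2(1-q)}$, the proportion of $1$s that any honest node can receive is at most $\tfrac{\beta+q}{2}<\beta$ whatever the adversary does, so any in-range threshold keeps that node at~$0$; an out-of-range ``threshold'' would be rejected by the honest nodes, since the protocol prescribes $\mathrm{Unif}([\beta,1-\beta])$. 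Hence this term is insensitive to degraded randomness, and the whole proposition reduces to re-proving a geometric tail bound on~$\Psi$.

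For that, I would call round~$m$ \emph{good} if its threshold is a fresh uniform variable that the adversary cannot predict; by hypothesis $\IP[\text{round }m+1\text{ good}\mid\mathcal{F}_m]\ge\theta$, and since $\delta=0$ all honest nodes act on this common value. On a good round the argument behind Theorems~4.1 and~6.1 of~\cite{fpc} applies unchanged and gives a uniform lower bound $p^\ast$ on $\IP[\Psi=m+1\mid\Psi>m,\,\mathcal{F}_m,\,\text{round }m+1\text{ good}]$, where (as in the discussion preceding this proposition) $p^\ast$ is a positive constant for the semi-cautious and berserk adversaries and $p^\ast=1-O(k^{-1/2})$ for the cautious one. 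On a non-good round I would make no claim at all, i.e.\ pessimistically let the adversary feed the honest nodes a common threshold equal to the current mean opinion --- exactly the metastable deterministic-threshold situation --- so that no progress is made. This gives $\IP[\Psi>m+1\mid\Psi>m,\mathcal{F}_m]\le 1-\theta p^\ast$ and hence $\IP[\Psi>m_0]\le(1-\theta p^\ast)^{m_0}\le\exp(-\theta p^\ast m_0)$.

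Plugging this back in: for the semi-cautious and berserk cases $\theta p^\ast$ is a positive constant, so we recover \eqref{eq_safety_liveness_semi} and \eqref{eq_safety_liveness_berserk} with new constants; for the cautious case $\theta p^\ast\ge\theta/2$ once~$k$ is large, which again yields an $\exp(-c_2' m_0)$ bound. I expect the main obstacle to be precisely this cautious case: the bound obtained this way has the same shape as the semi-cautious/berserk one rather than the sharper $\exp(-c_2 m_0\ln k)$ of \eqref{eq_safety_liveness_cautious}, and whether the $\ln k$ factor can be retained hinges on the exact model adopted for non-good rounds. If such a round merely \emph{reveals} the (still genuinely uniform, in-range) threshold early, then even a cautious adversary --- having only two mean-opinion targets to aim at --- neutralises the round with probability merely $O(k^{-1/2})$, and \eqref{eq_safety_liveness_cautious} survives verbatim up to constants; if instead the adversary may \emph{choose} that threshold, the gain is lost and one should read \eqref{eq_safety_liveness_cautious} in the proposition as the weaker, semi-cautious-shaped estimate. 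Making this dichotomy explicit, and justifying the uniform-in-configuration escape estimate $p^\ast$ quoted from~\cite{fpc}, is all that remains; the rest is the bookkeeping above together with the union bound already present in Theorem~\ref{t_main}.
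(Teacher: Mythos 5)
Your overall plan coincides with the paper's own (very brief) argument: the term $c_0 n\ell\exp(-c_1 k)$ only uses $\beta>q$ and the fact that thresholds lie in $[\beta,1-\beta]$, so it is untouched, and the whole point is to re-derive the tail of~$\Psi$ by calling a round ``good'' with probability at least~$\theta$ and iterating, exactly as in the remark preceding the proposition (``all random thresholds matter only until~$\Psi$''). Your geometric bound $\IP[\Psi>m_0]\le(1-\theta p^\ast)^{m_0}$ is precisely what the paper has in mind, and your observation that this silently weakens the cautious estimate is a fair criticism of the statement ``with other constants''.

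However, the first branch of your concluding dichotomy is wrong. A cautious adversary is constrained only to per-node consistency within a round, not to ``two mean-opinion targets'': it may let any number $j\in\{0,\dots,qn\}$ of its nodes answer~$1$, so it can place the common expected proportion seen by the honest nodes at any point of a window of width~$\approx q$ (granularity $1/n\ll k^{-1/2}$). Hence if a non-good round merely \emph{reveals} a still-uniform threshold in advance, the cautious adversary aligns that mean with the known threshold whenever the threshold falls in its window, i.e.\ with probability of order $q/(1-2\beta)$ --- a constant, not $O(k^{-1/2})$ --- and the honest nodes then split roughly evenly, so $\Psi$ is not triggered. Consequently the $\ln k$ factor in \eqref{eq_safety_liveness_cautious} is lost under \emph{both} readings of a non-good round, and the proposition must be understood as giving the cautious bound in the weaker form $\exp(-c_2' m_0)$ (with the corresponding adjustment $m_0=O(\ln n)$ rather than $O(\ln n/\ln\ln n)$ in Corollary~\ref{c_log_n}). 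With that reading, your proof is complete and is essentially the paper's argument; the claimed recovery of \eqref{eq_safety_liveness_cautious} ``verbatim'' in the reveal-only case should be deleted.
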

Also, we want to stress out that 
partial control of the random numbers does not give 
access to a lot of influence (in the worst-case the adversary would delay the consensus a bit). Hence, there is not much need to be restrictive on the degree of decentralization for that part\footnote{In other words, it may make sense that different parts of the system are decentralized to a different degree.}: 
a smaller subcommittee can take care of the random numbers' generation, and some VDF-based random number generation scheme 
(e.g., such as those of~\cite{Lenstra_Wes17,wes}) 
may be used to further prevent this subcommittee from leaking the numbers before the due time.

We also refer to~\cite{fpcsim} for a numerical study of the FPC's dependence on the ``quality'' of the random thresholds.

\paragraph{Complete network view}
In the theoretical results of FPC, we assume that every node holds a complete network view. However, in permissionless systems or in less reliable networks with inevitable churns (nodes join and leave), this assumption is not necessarily verified. While churns can be modeled by faulty nodes, the situation of a partial view on the network participants deserves more attention. The property of whether two given nodes can know and query each other induces a natural network topology or underlying graph structure. This graph structure has a crucial impact on the performances of FPC (think about the extreme case where the diameter of the graph is large and the propagation of opinions may take a very long time). However, for situations where the graph is well connected, e.g., an expander graph, we believe that the theoretical results remain true and the proof strategy can be applied (however, with an increased technical difficulty). First numerical results in this direction were obtained in~\cite{manaFPC}, where several graph topologies are modelled.

\paragraph{Sybil protection}
The main security assumption on FPC is that an adversary is not supposed to control more than some given proportion (dependent on the threat model) of the total number of nodes. While in permissioned systems the total number of nodes can be controlled, the permissionless setting {\it a priori} allows an adversary to create an unlimited amount of identities; this is known as a so-called ``Sybil'' attack. Any real-world implementation of FPC in a permissionless system should therefore contain a Sybil protection. A natural generalization of FPC with a Sybil protection was given in \cite{fpcFairness, manaFPC} and some mathematical properties were studied further in \cite{fpcAsypFairness}. In this model, every node is given a certain weight (that may correspond to a scarce resource) or reputation. This weight is then used for choosing the queries and weighting the obtained opinions. We refer to \cite{manaFPC} for more details on the protocol and security assumption in terms of percentages of weights instead of just proportions of the nodes.

In view of the above results,
let us stress that one of the main features of FPC is that it turns  ``rather weak'' consensus (on the random numbers, the participants of the network, and their weights) into a ``strong'' consensus on the value of a bit (i.e., the validity of a transaction), with high probability. 


\section{Conclusion}\label{s:conclusion}
Majority dynamics are probably the most natural candidates for consensus protocols and promise scalable solutions for the DLT space. While protocols in this class behave well and their study is rather straightforward in an honest environment, the situation changes drastically in the presence of Byzantine actors. In particular, a possible attacker may keep the system in a so-called ``metastable state'' 
and prevent the honest nodes from finding consensus. 

We suggested analyzing these protocols as random walks on a potential. This method offers a heuristic to understand the typical behavior and 
main issues of the standard majority dynamics. Finally, we gave an overview of the FPC protocol that uses common random thresholds to prevent the occurrence of metastable states and allows formal proofs of the security of the protocol.

\section*{Acknowledgements}
The authors thank Olivia Saa for obtaining 
the solution of~\eqref{scary_integral_eq},
 Alexandre Reiffers-Masson and Yao-Hua Xu 
for suggesting several corrections to 
the arguments in Section~\ref{s_simple_model},
and David Phillips for helping us to improve the writing style.
We also thank the anonymous referees for their useful comments and suggestions.

\bibliographystyle{plain}
\bibliography{bibliography}

\begin{thebibliography}{10}

\bibitem{AbMo:15}
Mohammed~Amin Abdullah and Moez Draief.
\newblock Global majority consensus by local majority polling on graphs of a
  given degree sequence.
\newblock {\em Discrete Applied Mathematics}, 180:1 -- 10, 2015.

\bibitem{aguilera2012correctness}
Marcos~K Aguilera and Sam Toueg.
\newblock The correctness proof of {B}en-{O}r’s randomized consensus
  algorithm.
\newblock {\em Distributed Computing}, 25(5):371--381, 2012.

\bibitem{Ald82}
David~J. Aldous.
\newblock Markov chains with almost exponential hitting times.
\newblock {\em Stochastic Processes and their Applications}, 13(3):305--310,
  1982.

\bibitem{AldBro92}
David~J. Aldous and Mark Brown.
\newblock {\em Inequalities for rare events in time-reversible {M}arkov chains.
  I}, volume~22 of {\em Lecture Notes--Monograph Series}, pages 1--16.
\newblock Institute of Mathematical Statistics, Hayward, CA, 1992.

\bibitem{AldBro93}
David~J. Aldous and Mark Brown.
\newblock Inequalities for rare events in time-reversible {M}arkov chains {II}.
\newblock {\em Stochastic Processes and their Applications}, 44(1):15--25,
  1993.

\bibitem{aumann2007security}
Yonatan Aumann and Yehuda Lindell.
\newblock Security against covert adversaries: Efficient protocols for
  realistic adversaries.
\newblock In {\em Theory of Cryptography Conference}, pages 137--156. Springer,
  2007.

\bibitem{becchetti2016stabilizing}
Luca Becchetti, Andrea Clementi, Emanuele Natale, Francesco Pasquale, and Luca
  Trevisan.
\newblock Stabilizing consensus with many opinions.
\newblock In {\em Proceedings of the twenty-seventh annual ACM-SIAM symposium
  on Discrete algorithms}, pages 620--635. SIAM, 2016.

\bibitem{vademecum}
M.~{Belotti}, N.~{Božić}, G.~{Pujolle}, and S.~{Secci}.
\newblock A vademecum on blockchain technologies: When, which, and how.
\newblock {\em IEEE Communications Surveys Tutorials}, 21(4):3796--3838, 2019.

\bibitem{ben1983another}
Michael Ben-Or.
\newblock Another advantage of free choice: Completely asynchronous agreement
  protocols (extended abstract).
\newblock In {\em Proceedings of the 2nd ACM Annual Symposium on Principles of
  Distributed Computing, Montreal, Quebec}, pages 27--30, 1983.

\bibitem{Beetal16}
Itai Benjamini, Siu-On Chan, Ryan O'Donnell, Omer Tamuz, and Li-Yang Tan.
\newblock Convergence, unanimity and disagreement in majority dynamics on
  unimodular graphs and random graphs.
\newblock {\em Stochastic Processes and their Applications}, 126(9):2719 --
  2733, 2016.

\bibitem{BovHol16}
Anton Bovier and Frank den Hollander.
\newblock {\em Metastability. A potential-theoretic approach}.
\newblock Springer, 2016.

\bibitem{BEGK04}
Anton Bovier, Michael Eckhoff, V\'eronique Gayrard, and Markus Klein.
\newblock Metastability in reversible diffusion processes {I}: Sharp
  asymptotics for capacities and exit times.
\newblock {\em Journal of European Mathematical Society}, 6:399--424, 2004.

\bibitem{bracha1987asynchronous}
Gabriel Bracha.
\newblock Asynchronous {B}yzantine agreement protocols.
\newblock {\em Information and Computation}, 75(2):130--143, 1987.

\bibitem{BraCoxGal01}
Maury Bramson, J.Theodore Cox, and Jean-François Le~Gall.
\newblock Super-{B}rownian limits of voter model clusters.
\newblock {\em Ann. Probab.}, 29(3):1001--1032, 07 2001.

\bibitem{fpcsim}
Angelo Capossele, Sebastian Mueller, and Andreas Penzkofer.
\newblock Robustness and efficiency of leaderless probabilistic consensus
  protocols within {B}yzantine infrastructures, 2019.

\bibitem{cascudo2017scrape}
Ignacio Cascudo and Bernardo David.
\newblock Scrape: Scalable randomness attested by public entities.
\newblock In {\em International Conference on Applied Cryptography and Network
  Security}, pages 537--556. Springer, 2017.

\bibitem{CGOV84}
Marzio {Cassandro}, Antonio {Galves}, Enzo {Olivieri}, and Maria~Eul{\'a}lia
  {Vares}.
\newblock {Metastable behavior of stochastic dynamics: A pathwise approach}.
\newblock {\em Journal of Statistical Physics}, 35(5-6):603--634, June 1984.

\bibitem{CliffSud}
Peter Clifford and Aidan Sudbury.
\newblock A model for spatial conflict.
\newblock {\em Biometrika}, 60(3):581--588, 1973.

\bibitem{codd1968}
Edgar~F. Codd.
\newblock {\em {Cellular Automata}}.
\newblock Academic Press, 1968.

\bibitem{ComPop03}
F.~Comets and S.~Popov.
\newblock Limit law for transition probabilities and moderate deviations for
  {S}inai's random walk in random environment.
\newblock {\em Probab. Theory Related Fields}, 126(4):571--609, 2003.

\bibitem{cooper2014power}
Colin Cooper, Robert Els{\"a}sser, and Tomasz Radzik.
\newblock The power of two choices in distributed voting.
\newblock In {\em International Colloquium on Automata, Languages, and
  Programming}, pages 435--446. Springer, 2014.

\bibitem{cooper2015fast}
Colin Cooper, Robert Els{\"a}sser, Tomasz Radzik, Nicolas Rivera, and Takeharu
  Shiraga.
\newblock Fast consensus for voting on general expander graphs.
\newblock In {\em International Symposium on Distributed Computing}, pages
  248--262. Springer, 2015.

\bibitem{CoxDurPer00}
J.~Theodore Cox, Richard Durrett, and Edwin~A. Perkins.
\newblock Rescaled voter models converge to super-{B}rownian motion.
\newblock {\em Ann. Probab.}, 28(1):185--234, 01 2000.

\bibitem{CoxGri83}
J.~Theodore Cox and David Griffeath.
\newblock Occupation time limit theorems for the voter model.
\newblock {\em Ann. Probab.}, 11(4):876--893, 11 1983.

\bibitem{CruiseGanesh14}
James Cruise and Ayalvadi Ganesh.
\newblock Probabilistic consensus via polling and majority rules.
\newblock {\em Queueing Syst. Theory Appl.}, 78(2):99--120, October 2014.

\bibitem{DemGuiZei01}
Amir Dembo, Alice Guionnet, and Ofer Zeitouni.
\newblock Aging properties of {S}inai’s model of random walk in random
  environment.
\newblock In {\em St. Flour Summer School 2001, Springer’s Lecture Notes in
  Mathematics}, volume 1837, 2001.

\bibitem{DoerrStabilizing}
Benjamin Doerr, Leslie~Ann Goldberg, Lorenz Minder, Thomas Sauerwald, and
  Christian Scheideler.
\newblock Stabilizing consensus with the power of two choices.
\newblock In {\em Proceedings of the Twenty-Third Annual ACM Symposium on
  Parallelism in Algorithms and Architectures}, SPAA '11, page 149–158, New
  York, NY, USA, 2011. Association for Computing Machinery.

\bibitem{Dur10}
R.~Durrett.
\newblock {\em Probability: theory and examples}.
\newblock Cambridge Series in Statistical and Probabilistic Mathematics.
  Cambridge University Press, Cambridge, fourth edition, 2010.

\bibitem{elsasser2016rapid}
Robert Els{\"a}sser, Tom Friedetzky, Dominik Kaaser, Frederik Mallmann-Trenn,
  and Horst Trinker.
\newblock Rapid asynchronous plurality consensus.
\newblock {\em arXiv preprint arXiv:1602.04667}, 2016.

\bibitem{EnrSabZin09}
Nathana\"el Enriquez, Christophe Sabot, and Olivier Zindy.
\newblock Aging and quenched localization for one-dimensional random walks in
  random environment in the sub-ballistic regime.
\newblock {\em Bulletin de la Soci\'et\'e Math\'ematique de France},
  137(3):423--452, 2009.

\bibitem{fanti2019communication}
Giulia Fanti, Nina Holden, Yuval Peres, and Gireeja Ranade.
\newblock Communication cost of consensus for nodes with limited memory.
\newblock {\em Proceedings of the National Academy of Sciences},
  117(11):5624--5630, 2020.

\bibitem{feldman1989optimal}
Paul Feldman and Silvio Micali.
\newblock An optimal probabilistic algorithm for synchronous {B}yzantine
  agreement.
\newblock In {\em International Colloquium on Automata, Languages, and
  Programming}, pages 341--378. Springer, 1989.

\bibitem{FreWen12}
M.I. Freidlin, J.~Sz{\"u}cs, and A.D. Wentzell.
\newblock {\em Random Perturbations of Dynamical Systems}.
\newblock Grundlehren der mathematischen Wissenschaften. Springer, 2012.

\bibitem{friedman2005simple}
Roy Friedman, Achour Mostefaoui, and Michel Raynal.
\newblock Simple and efficient oracle-based consensus protocols for
  asynchronous {B}yzantine systems.
\newblock {\em IEEE Transactions on Dependable and Secure Computing},
  2(1):46--56, 2005.

\bibitem{GaKuLe:78}
P.~G{\'a}cs, G.~L. Kurdyumov, and Leonid~A. Levin.
\newblock One-dimensional uniform arrays that wash out finite islands.
\newblock In {\em Problemy Peredachi Informatsii}, 1978.

\bibitem{GaZe:18}
Bernd G{\"a}rtner and Ahad~N. Zehmakan.
\newblock Majority model on random regular graphs.
\newblock {\em Lecture Notes in Computer Science}, pages 572--583, 2018.

\bibitem{GoMaMaBe:15}
Alexander Gogolev, Nikolaj Marchenko, Lucio Marcenaro, and Christian
  Bettstetter.
\newblock Distributed binary consensus in networks with disturbances.
\newblock {\em ACM Trans. Auton. Adapt. Syst.}, 10(3):19:1--19:17, September
  2015.

\bibitem{fpcAsypFairness}
A.~Gutierrez, S.~M\"{u}ller, and S.~\v{S}ebek.
\newblock On asymptotic fairness in voting consensus protocols.
\newblock 2021.

\bibitem{holley1975ergodic}
Richard~A. Holley and Thomas~M. Liggett.
\newblock Ergodic theorems for weakly interacting infinite systems and the
  voter model.
\newblock {\em The Annals of Probability}, 3(4):643--663, 1975.

\bibitem{Jedr19}
Arkadiusz Jedrzejewski and Katarzyna Sznajd-Weron.
\newblock Statistical physics of opinion formation: Is it a spoof?
\newblock {\em Comptes Rendus Physique}, 20(4):244--261, 2019.

\bibitem{KaMo:07}
S.~{Kar} and J.~M.~F. {Moura}.
\newblock Distributed average consensus in sensor networks with random link
  failures.
\newblock In {\em 2007 IEEE International Conference on Acoustics, Speech and
  Signal Processing - ICASSP '07}, volume~2, pages II--1013--II--1016, April
  2007.

\bibitem{Keil80}
Julian Keilson.
\newblock {M}arkov chain models – rarity and exponentiality. {A}pplied
  {M}athematical {S}ciences 28. {B}erlin-{H}eidelberg-{N}ew {Y}ork,
  {S}pringer-{V}erlag 1979.
\newblock {\em ZAMM - Journal of Applied Mathematics and Mechanics /
  Zeitschrift für Angewandte Mathematik und Mechanik}, 60(5):272--272, 1980.

\bibitem{KS21}
Thomas~G. Kurtz and Jason Swanson.
\newblock Finite {M}arkov chains coupled to general {M}arkov processes and an
  application to metastability {I}, 2021.

\bibitem{Lenstra_Wes17}
Arjen~K. Lenstra and Benjamin Wesolowski.
\newblock Trustworthy public randomness with sloth, unicorn, and trx.
\newblock {\em International Journal of Applied Cryptography}, 3(4):330--343,
  2017.

\bibitem{Liggett}
Thomas~M. Liggett.
\newblock {\em Interacting particle systems}, volume 276.
\newblock Springer Science \& Business Media, 2012.

\bibitem{MQS21}
Francesco Manzo, Matteo Quattropani, and Elisabetta Scoppola.
\newblock A probabilistic proof of {C}ooper and {F}rieze's "{F}irst {V}isit
  {T}ime {L}emma", 2021.

\bibitem{ManSco19}
Francesco Manzo and Elisabetta Scoppola.
\newblock Exact results on the first hitting via conditional strong
  quasi-stationary times and applications to metastability.
\newblock {\em Journal of Statistical Physics}, 174:1239--1262, 2019.

\bibitem{MenPopWad17}
Mikhail Menshikov, Serguei Popov, and Andrew Wade.
\newblock {\em Non-homogeneous random walks -- {L}yapunov function methods for
  near-critical stochastic systems}.
\newblock Cambridge University Press, Cambridge, 2017.

\bibitem{MoDiAm:04}
A.A. Moreira, A.~Mathur, D.~Diermeier, and L.A.N. Amaral.
\newblock Efficient system-wide coordination in noisy environments.
\newblock {\em Proc. Natl. Acad. Sci. U.S.A.}, 101:12085--12090, AUG 2004.

\bibitem{MoNeTa:13}
Elchanan Mossel, Joe Neeman, and Omer Tamuz.
\newblock Majority dynamics and aggregation of information in social networks.
\newblock {\em Autonomous Agents and Multi-Agent Systems}, 28(3):408--429, Jun
  2013.

\bibitem{fpcFairness}
S.~M\"{u}ller, A.~Penzkofer, D.~Camargo, and O.~Saa.
\newblock On fairness in voting consensus protocols.
\newblock {\em Computing Conference 21}, to appear.

\bibitem{manaFPC}
Sebastian M{\"u}ller, Andreas Penzkofer, Bartosz Ku\'{s}mierz, Darcy Camargo,
  and William~J. Buchanan.
\newblock {Fast Probabilistic Consensus with weighted votes}.
\newblock {\em FTC 2020. Advances in Intelligent Systems and Computing, vol
  1289}, 2, 2020.

\bibitem{nakamoto}
Satoshi Nakomoto.
\newblock {Bitcoin: A Peer-to-Peer Electronic Cash System.}
\newblock https://bitcoin.org/bitcoin.pdf, 2008.

\bibitem{nkn}
{NKN Lab}.
\newblock {NKN}: a scalable self-evolving and self-incentivized decentralized
  network, 2018.

\bibitem{BistablePotential}
Piotr Nyczka, Jerzy Cis\l{}o, and Katarzyna Sznajd-Weron.
\newblock Opinion dynamics as a movement in a bistable potential.
\newblock {\em Physica A: Statistical Mechanics and its Applications},
  391(1):317--327, 2012.

\bibitem{OliVar05}
Enzo Olivieri and Maria~Eul\'alia Vares.
\newblock {\em Large Deviations and Metastability}.
\newblock Encyclopedia of Mathematics and its Applications. Cambridge
  University Press, 2005.

\bibitem{PenLeb71}
O.~{Penrose} and J.~L. {Lebowitz}.
\newblock {Rigorous treatment of metastable states in the van der Waals-Maxwell
  theory}.
\newblock {\em Journal of Statistical Physics}, 3(2):211--236, June 1971.

\bibitem{popov2017decentralized}
Serguei Popov.
\newblock On a decentralized trustless pseudo-random number generation
  algorithm.
\newblock {\em Journal of Mathematical Cryptology}, 11(1):37--43, 2017.

\bibitem{Popov21}
Serguei Popov.
\newblock {\em Two-Dimensional Random Walk -- From Path Counting to Random
  Interlacements}.
\newblock Cambridge University Press, 2021.

\bibitem{fpc}
Serguei Popov and William~J. Buchanan.
\newblock {FPC-BI: Fast Probabilistic Consensus within {B}yzantine
  Infrastructures}.
\newblock {\em Journal of Parallel and Distributed Computing}, 147:77--86,
  2021.

\bibitem{PreSpo83}
Errico Presutti and Herbert Spohn.
\newblock Hydrodynamics of the voter model.
\newblock {\em Ann. Probab.}, 11(4):867--875, 11 1983.

\bibitem{rabin1983randomized}
Michael~O Rabin.
\newblock Randomized {B}yzantine generals.
\newblock In {\em 24th Annual Symposium on Foundations of Computer Science
  (sfcs 1983)}, pages 403--409. IEEE, 1983.

\bibitem{Redner19}
Sidney Redner.
\newblock Reality-inspired voter models: A mini-review.
\newblock {\em Comptes Rendus Physique}, 20(4):275--292, 2019.

\bibitem{Ava18}
Team Rocket.
\newblock Snowflake to {A}valanche: A novel metastable consensus protocol
  family for cryptocurrencies, 2018.

\bibitem{Ava19}
Team Rocket, Maofan Yin, Kevin Sekniqi, Robbert van Renesse, and Emin~Gün
  Sirer.
\newblock Scalable and probabilistic leaderless {BFT} consensus through
  metastability, 2019.

\bibitem{Ross_Models09}
Sheldon~M. Ross.
\newblock {\em Introduction to Probability Models}.
\newblock Academic Press, 10th edition, 2009.

\bibitem{schindlerhydrand}
Philipp Schindler, Aljosha Judmayer, Nicholas Stifter, and Edgar Weippl.
\newblock Hydrand: Practical continuous distributed randomness.
\newblock Cryptology ePrint Archive, Report 2018/319, 2018.
\newblock \url{https://eprint.iacr.org/2018/319}.

\bibitem{Sinai82}
Ya.G. {Sinaj}.
\newblock {Limit behavior of the one-dimensional random walks in random
  environments}.
\newblock {\em {Teor. Veroyatn. Primen.}}, 27:247--258, 1982.

\bibitem{Sow99}
Richard~B. Sowers.
\newblock Hydrodynamical limits and geometric measure theory: Mean curvature
  limits from a threshold voter model.
\newblock {\em Journal of Functional Analysis}, 169(2):421--455, 1999.

\bibitem{syta2017scalable}
Ewa Syta, Philipp Jovanovic, Eleftherios~Kokoris Kogias, Nicolas Gailly, Linus
  Gasser, Ismail Khoffi, Michael~J Fischer, and Bryan Ford.
\newblock Scalable bias-resistant distributed randomness.
\newblock In {\em 2017 IEEE Symposium on Security and Privacy (SP)}, pages
  444--460. IEEE, 2017.

\bibitem{TaKiFu:96}
Mieko Tanaka-Yamawaki, Sachiko Kitamikado, and Toshio Fukuda.
\newblock Consensus formation and the cellular automata.
\newblock {\em Robotics and Autonomous Systems}, 19(1):15--22, 1996.

\bibitem{TrVuPowerOfFew}
Linh~V. Tran and Van Vu.
\newblock Reaching a consensus on random networks: The power of few.
\newblock In Jaroslaw Byrka and Raghu Meka, editors, {\em Approximation,
  Randomization, and Combinatorial Optimization. Algorithms and Techniques,
  {APPROX/RANDOM} 2020, August 17-19, 2020, Virtual Conference}, volume 176 of
  {\em LIPIcs}, pages 20:1--20:15. Schloss Dagstuhl - Leibniz-Zentrum f{\"{u}}r
  Informatik, 2020.

\bibitem{wes}
Benjamin Wesolowski.
\newblock Efficient verifiable delay functions.
\newblock {\em J. Cryptol.}, 33:2113--2147, 2020.

\bibitem{WolframCA}
Stephen Wolfram.
\newblock {\em Cellular Automata and Complexity: Collected Papers}.
\newblock Westview Press, 1994.

\end{thebibliography}

\end{document}